\documentclass{article}
\usepackage[hidelinks]{hyperref}
\usepackage[utf8]{inputenc}
\usepackage{authblk}
\usepackage{amsthm}
\usepackage{amsmath}
\usepackage{amssymb}
\usepackage{tikz}
\usepackage{algorithm}
\usepackage{algpseudocode}
\usetikzlibrary{graphs}
\theoremstyle{plain}

\pgfdeclarelayer{background}
\pgfdeclarelayer{foreground}
\pgfsetlayers{background,main,foreground}
\tikzset{
    my box/.style = {
        , line cap = round
        , line join = round
    }
  }
\newcommand{\highlight}[3]{
  \path [my box, line width = #1, draw = #2, transparency group, opacity=1] #3;
}
\usetikzlibrary{arrows.meta}
\def\niceArrow{-{Stealth[length=2.25mm]}}
\usepackage{subcaption}

\usepackage[margin=1cm]{caption}

\newtheorem{proposition}{Proposition}
\newtheorem{corollary}{Corollary}
\newtheorem{claim}{Claim}
\newtheorem{lemma}{Lemma}
\newtheorem{theorem}{Theorem}

\theoremstyle{definition}

\newtheorem{example}{Example}

\newtheorem{note}{Note}

\newcommand{\FBOX}{\hspace*{\fill}$\rule{0.17cm}{0.17cm}$}

\usepackage[a4paper, total={6.9in, 9.75in}]{geometry}

\usepackage{enumitem}
\setlist[itemize]{leftmargin=*}

\algnewcommand{\True}{\textbf{true}}
\algnewcommand{\False}{\textbf{false}}
\algnewcommand{\Continue}{\textbf{continue}}
\algnewcommand{\Break}{\textbf{break}}
\algnewcommand{\Not}{\textbf{not}\ }
\algnewcommand{\And}{\textbf{and}\ }
\newcommand{\Input}{\item[\textbf{Input:}]}
\newcommand{\Output}{\item[\textbf{Output:}]}
\newcommand{\Effect}{\item[\textbf{Effect:}]}

\makeatletter

\makeatother
\date{}

\usepackage[yyyymmdd]{datetime}

\usepackage{titling}
\thanksmarkseries{arabic}

\settowidth{\thanksmarkwidth}{*}
\setlength{\thanksmargin}{-\thanksmarkwidth}

\title{Quadratic-Time Algorithm for the\\ Maximum-Weight $(k, \ell)$-Sparse Subgraph Problem}

\author{
  Bence De\'ak\thanks{Department of Operations Research, E{\"o}tv{\"o}s Lor{\'a}nd University, Budapest, Hungary. E-mail: \texttt{deakbence2002@gmail.com}}
  \and P\'eter Madarasi\thanks{HUN-REN Alfr\'{e}d R\'{e}nyi Institute of Mathematics, and Department of Operations Research, E{\"o}tv{\"o}s Lor{\'a}nd University, Budapest, Hungary. E-mail: \texttt{madarasip@staff.elte.hu}}
}

\begin{document}

\maketitle

\vspace{-5mm}
\begin{abstract}
  The family of $(k, \ell)$-sparse graphs, introduced by Lorea, plays a central role in combinatorial optimization and has a wide range of applications, particularly in rigidity theory.
  A key algorithmic challenge is to compute a maximum-weight $(k, \ell)$-sparse subgraph of a given edge-weighted graph.
  Although prior approaches have long provided an $O(nm)$-time solution, a previously proposed $O(n^2 + m)$ method was based on an incorrect analysis, leaving open whether this bound is achievable.

  We answer this question affirmatively by presenting the first $O(n^2 + m)$-time algorithm for computing a maximum-weight $(k, \ell)$-sparse subgraph, which combines an efficient data structure with a refined analysis.
  This quadratic-time algorithm enables faster solutions to key problems in rigidity theory, including computing minimum-weight redundantly rigid and globally rigid subgraphs.
  Further applications include enumerating non-crossing minimally rigid frameworks and recognizing kinematic joints.
  Our implementation of the proposed algorithm is publicly available online.

  \medskip
  \noindent\textbf{Keywords:} $(k, \ell)$-sparse graphs, pebble game algorithm, rigidity theory, efficient data structures
\end{abstract}
\bigskip

\section{Introduction}\label{sec:intro}
Throughout this paper, let $k$ and $\ell$ be non-negative integers with $\ell < 2k$, let $G = (V, E)$ be a loopless multigraph with $n = |V|$ vertices and $m = |E|$ edges, and assume that the edges $e_1, \dots, e_m$ are sorted in non-increasing order with respect to a given weight function.
The graph $G = (V, E)$ is called \emph{$(k, \ell)$-sparse} if, for each subset $X \subseteq V$, the number $i(X)$ of edges induced by $X$ is at most $\max\{k|X| - \ell, 0\}$.
Furthermore, if $G$ is $(k, \ell)$-sparse and has exactly $\max\{k |V| - \ell, 0\}$ edges, then we say that $G$ is \emph{$(k, \ell)$-tight}.
A graph is \emph{$(k, \ell)$-spanning} if it contains a $(k, \ell)$-tight subgraph that spans the entire vertex set.
A \emph{$(k, \ell)$-block} of a $(k, \ell)$-sparse graph is a subset $X \subseteq V$ that induces a $(k, \ell)$-tight subgraph.
A \emph{$(k, \ell)$-component} is an inclusion-wise maximal $(k, \ell)$-block.
These definitions form the foundation of a rich theory, closely connected to matroids and rigidity, and they naturally lead to fundamental optimization problems that are the focus of this paper.

\smallskip
The concept of $(k, \ell)$-sparse graphs was first introduced by Lorea~\cite{lorea} as part of his work on matroidal families.
Since then, these graphs have been the subject of extensive research, with numerous applications in various areas of mathematics and computer science.
For example, $(k, k)$-tight graphs appeared in the work of Nash-Williams~\cite{nash1961edgeDisjointST} and Tutte~\cite{tutte} as a characterization of graphs that can be decomposed into $k$ edge-disjoint spanning trees.
A classical result in rigidity theory, due to Laman~\cite{laman}, characterizes generic minimally rigid bar-joint frameworks in the plane as $(2, 3)$-tight graphs, while the rigid graphs correspond to the $(2, 3)$-spanning graphs.
For a comprehensive overview of rigidity theory, see~\cite{jordan2016combinatorial,schulze2017rigidity}.

A wide range of optimization and decision problems can be reduced to the \emph{maximum-size $(k, \ell)$-sparse subgraph problem}, where $k$ and $\ell$ are fixed non-negative integers determined by the specific task.
The goal is to find a $(k, \ell)$-sparse subgraph of a given graph $G = (V, E)$ that contains the maximum number of edges.
A natural generalization is the \emph{maximum-weight $(k, \ell)$-sparse subgraph problem}, where the goal is to find a $(k, \ell)$-sparse subgraph with maximum total weight under a given weight function on the edges.
Many classical problems in graph theory can be viewed as special cases of this framework --- for example, finding a maximum-weight spanning tree, a maximum-weight subgraph that can be decomposed into $k$ forests, or a maximum-weight rigid spanning subgraph in the plane.

Both the maximum-size and the maximum-weight $(k, \ell)$-sparse subgraph problems can be solved using the pebble game algorithms introduced in~\cite{bergPhD, berg2003algorithms, hendrickson, pebble, pebbleDS}.
The formulation in~\cite{berg2003algorithms} provides a more convenient orientation-based view of these algorithms, which we adopt throughout this paper.
The naive implementation of these algorithms runs in $O(nm)$ time, where $n$ and $m$ denote the number of vertices and edges, respectively.
A widely accepted approach in the literature claimed to improve this to $O(n^2 + m)$ by employing a more sophisticated data structure~\cite{pebble,pebbleDS}, building on ideas in~\cite{berg2003algorithms}.
In recent years, however, several researchers have pointed out that the running time analysis of the improved method is flawed~\cite{madarasi2023klSparse, mihalyko2022augmentation}, and the question of whether such a time bound is truly achievable has remained open.

\smallskip
In our paper, we provide a positive answer to this question: we present an algorithm that solves the maximum-weight $(k, \ell)$-sparse subgraph problem in $O(n^2 + m)$ time.
Our approach combines a carefully designed data structure with a refined analysis, leading to both theoretical and practical advances.
As a direct corollary, the maximum-weight rigid subgraph problem --- corresponding to the well-known case $k = 2$, $\ell = 3$ --- can now be solved in quadratic time.
Beyond its standalone significance, the maximum-weight $(k, \ell)$-sparse subgraph problem also frequently arises as a subroutine in more complex combinatorial optimization problems.
For example, our algorithm substantially improves the running time of approximation algorithms for the minimum-weight redundantly rigid and globally rigid subgraph problems~\cite{jordan2020minimum} and their generalizations~\cite{mihalyko2022augmentation} in the metric case.
Further applications include enumerating non-crossing minimally rigid frameworks~\cite{avis2008enumerating}, as well as recognizing kinematic joints~\cite{john2012kinematic}.
An efficient implementation of the algorithms discussed in this paper is publicly available online~\cite{githubSparse}, and a detailed empirical evaluation can be found in~\cite{madarasi2025efficientKL}.

\smallskip
The rest of the paper is organized as follows.
Section~\ref{sec:naive_algo} provides an overview of the classical augmenting-path algorithm --- often referred to as the ``naive'' pebble game --- for computing maximum-size and maximum-weight $(k, \ell)$-sparse subgraphs.
In Section~\ref{sec:comp_algo}, we present our component-based variant of this algorithm, which forms the core of our contributions.
This high-level approach relies on two auxiliary procedures for managing $(k, \ell)$-components, whose implementation can be tailored to the specific setting of the problem.
Our main contribution is presented in Section~\ref{sec:alg_general}, where we give a complete implementation of these subroutines for the general range $0 \leq \ell < 2k$.
Combining structural properties of $(k, \ell)$-components with an efficient data structure, we prove that the resulting algorithm achieves a running time of $O(n^2 + m)$, improving on the $O(nm)$ bound of the naive approach.
Finally, Section~\ref{sec:spec_cases} addresses two notable special cases.
First, we examine the case $0 \leq \ell \leq k$, where the disjointness of $(k, \ell)$-components allows for a simplified implementation reducing the space complexity from $O(n^2)$ to $O(n)$.
Second, we address the unweighted variant for the general range, in which a vertex-wise edge-processing strategy retains the $O(n^2 + m)$ running time while requiring only $O(n)$ space.
These results together establish a unified algorithmic framework for efficiently solving the maximum-size and maximum-weight $(k, \ell)$-sparse subgraph problem across the full range of parameters.

\subsection{The naive augmenting-path algorithm}\label{sec:naive_algo}
We overview the classical, or ``naive'', version of the pebble game algorithm for finding a maximum-size or maximum-weight $(k, \ell)$-sparse subgraph.
We begin by stating two fundamental results that are essential for proving the correctness of the algorithm, and then present the algorithm itself.

\begin{theorem}[Lorea, 1979~\cite{lorea}]
\label{theorem:matroid}
Let $G = (V, E)$ be a graph, and define $\mathcal{I} = \{F \subseteq E : (V, F) \text{ is } (k, \ell)\text{-sparse} \}$.
Then $M = (E, \mathcal{I})$ is a matroid.
\FBOX
\end{theorem}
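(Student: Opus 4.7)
The plan is to verify the three matroid axioms for $\mathcal{I}$: containing the empty set $(I_1)$, closure under taking subsets $(I_2)$, and the augmentation property $(I_3)$. The first two are immediate from the definition of $(k,\ell)$-sparsity, since $i(X)$ is monotone in the underlying edge set and vanishes on $\emptyset$; I would dispose of them in a single sentence.

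For $(I_3)$, I would take $F_1, F_2 \in \mathcal{I}$ with $|F_1| < |F_2|$ and argue by contradiction, assuming $F_1 + e \notin \mathcal{I}$ for every $e \in F_2 \setminus F_1$. Each such failure is witnessed by a \emph{tight} vertex set $X_e$ containing both endpoints of $e$, meaning $i_{F_1}(X_e) = k|X_e| - \ell$ with $|X_e| \geq 2$. The hypothesis $\ell < 2k$ is crucial here: it guarantees $k|X| - \ell > 0$ whenever $|X| \geq 2$, so the $\max\{\cdot, 0\}$ truncation in the sparsity condition is inert on every set that appears in the argument, and I can freely use the plain bound $i(X) \leq k|X| - \ell$ for both $F_1$ and $F_2$. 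I would take each $X_e$ to be inclusion-wise maximal among tight sets containing both endpoints of $e$.

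The crux is a closure-under-union property: if $X, Y$ are tight in $F_1$ with $|X \cap Y| \geq 2$, then $X \cup Y$ is tight. This follows quickly from the submodular-type inequality $i_{F_1}(X) + i_{F_1}(Y) \leq i_{F_1}(X \cup Y) + i_{F_1}(X \cap Y)$ combined with sparsity applied to $X \cup Y$ and $X \cap Y$. The main technical subtlety, which I expect to be the hardest point, is that this lemma legitimately fails when two tight sets meet in a single vertex; the resolution is that choosing each $X_e$ maximal forces any two distinct maximal tight sets to share at most one vertex (otherwise their union would strictly enlarge one of them and still be tight). Consequently, each edge of $F_1 \cup F_2$ is contained in at most one such maximal set, while every edge of $F_2 \setminus F_1$ lies inside exactly one. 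A per-set count then closes the argument: for each maximal tight set $X^*$, the identity $i_{F_1}(X^*) = k|X^*| - \ell \geq i_{F_2}(X^*)$ gives $|\{e \in F_2 \setminus F_1 : e \subseteq X^*\}| \leq i_{F_1 \setminus F_2}(X^*)$, and summing over all such $X^*$ yields $|F_2 \setminus F_1| \leq |F_1 \setminus F_2|$, contradicting $|F_1| < |F_2|$.
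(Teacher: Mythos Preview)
The paper does not prove this theorem itself: the end-of-proof box appears immediately after the statement, marking it as a result cited from Lorea and taken as a black box. Your proposal therefore supplies an argument the paper deliberately omits, and the argument is correct.

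A few minor remarks. The inequality $i_{F_1}(X) + i_{F_1}(Y) \leq i_{F_1}(X \cup Y) + i_{F_1}(X \cap Y)$ is \emph{supermodularity} of the induced-edge count, not a ``submodular-type'' inequality (the paper uses exactly this term in its proof of Lemma~\ref{lemma:intersection}); the direction you wrote is the right one, only the name is off. Your union-closure step and the consequence that distinct maximal tight sets share at most one vertex are precisely Lemma~\ref{lemma:intersection} and Corollary~\ref{corollary:comp_intersection}, stated there for $(k,\ell)$-blocks of a $(k,\ell)$-sparse graph, so the structural tools you rely on are developed later in the paper anyway for the algorithmic analysis. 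It is also slightly cleaner to take each $X_e$ to be an inclusion-wise maximal tight set outright (a $(k,\ell)$-component of $(V,F_1)$) rather than maximal among tight sets containing the endpoints of $e$; the two notions coincide, and the simpler framing makes the one-vertex-overlap property immediate.
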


We also need the following technical result, which can be derived from Hakimi's Orientation Lemma~\cite{SLHOrientationLemma}.
Here we give a direct proof.
Note that a similar lemma appeared as Lemma~1 in~\cite{berg2003algorithms} for $(k, \ell)=(2, 3)$.
\begin{lemma}\label{lemma:sparse_orientation}
Let $H = (V, F)$ be a $(k, \ell)$-sparse graph, and let $u,v \in V$ be two distinct vertices.
Suppose $D$ is an orientation of $H$ in which each vertex has indegree at most $k$.
Then:
\begin{enumerate}
    \item If $\varrho_D(u) + \varrho_D(v) < 2k - \ell$, then $H' = (V, F \cup \{uv\})$ is also $(k, \ell)$-sparse.
    \item If $\varrho_D(u) + \varrho_D(v) \geq 2k - \ell$, and there is no directed path in $D$ from any vertex $w \in V \setminus \{u, v\}$ with $\varrho_D(w) < k$ to either $u$ or $v$, then $H' = (V, F \cup \{uv\})$ is not $(k, \ell)$-sparse.
\end{enumerate}
\end{lemma}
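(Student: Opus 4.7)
The plan is to prove both parts by careful counting of indegrees under the orientation $D$. The key identities are that $\sum_{x\in X}\varrho_D(x)$ is always an upper bound on the number of edges of $H$ with head in $X$ and, in particular, is at least $i_H(X)$, with equality exactly when no edge of $H$ enters $X$ from outside.

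For Part~1, I would argue by contradiction. Assuming $H' = (V, F \cup \{uv\})$ is not $(k,\ell)$-sparse, some $X \subseteq V$ violates the sparsity bound in $H'$. Since $H$ itself is sparse, the offending set must contain both endpoints of the new edge, i.e.\ $\{u,v\} \subseteq X$, and must satisfy $i_H(X) = k|X|-\ell$ (in particular, this quantity is non-negative because $|X|\ge 2$ and $\ell<2k$). Summing indegrees, $\sum_{x\in X}\varrho_D(x) \ge i_H(X) = k|X|-\ell$. Since every vertex of $X\setminus\{u,v\}$ contributes at most $k$, this forces $\varrho_D(u)+\varrho_D(v) \ge 2k-\ell$, contradicting the hypothesis.

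For Part~2, I would exhibit a concrete set $X$ certifying that $H'$ fails sparsity. Define $X$ to be the set of all vertices admitting a directed $D$-path to $u$ or $v$, with $u,v$ included via the trivial path. The structural property driving the argument is that $X$ has no incoming $D$-edges from $V\setminus X$: any such edge would let its tail reach $u$ or $v$, so the tail would already belong to $X$. Hence $\sum_{x\in X}\varrho_D(x) = i_H(X)$ exactly. The hypothesis implies every $w\in X\setminus\{u,v\}$ has $\varrho_D(w)=k$, since otherwise $w$ would be a low-indegree vertex reaching $\{u,v\}$. Combining this with $\varrho_D(u)+\varrho_D(v)\ge 2k-\ell$ yields $i_H(X) \ge k|X|-\ell$. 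Sparsity of $H$ forces equality, so $i_{H'}(X) = k|X|-\ell+1$, breaking the bound (using again $|X|\ge 2$ and $\ell<2k$ to discard the $\max$).

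The proof has no genuine obstacle; the main point requiring care is the definition of $X$ in Part~2 and verifying its closure property, which is what converts the inequality $\sum \varrho_D(x)\ge i_H(X)$ into an equality and lets the indegree count match the internal edge count exactly. The two parts are essentially dual to each other: Part~1 uses the upper bound $\varrho_D(x)\le k$ to derive a contradiction, while Part~2 uses the lower bound $\varrho_D(x)= k$ on $X\setminus\{u,v\}$ guaranteed by the no-path hypothesis to construct the witness.
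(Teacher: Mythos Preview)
Your proposal is correct and follows essentially the same approach as the paper: both parts rely on the identity $i_H(X)=\sum_{w\in X}\varrho_D(w)-\varrho_D(X)$ together with the indegree bound, and your witness set in Part~2 is exactly the paper's set $T$ of vertices from which $u$ or $v$ is reachable. The only cosmetic difference is that you phrase Part~1 by contradiction whereas the paper verifies the sparsity inequality directly for every $X\supseteq\{u,v\}$.
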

\begin{proof}
  We prove the two parts separately.

  \smallskip
  1.\ Since $H$ is $(k, \ell)$-sparse, it suffices to verify the sparsity condition for subsets $X \subseteq V$ containing both $u$ and $v$, as these are the only sets affected by the insertion of the edge $uv$.
  For any such $X$,
  \[
    i_{H'}(X) = i_H(X) + 1 \leq \sum_{w \in X} \varrho_D(w) + 1 \leq k(|X| - 2) + (2k - \ell - 1) + 1 = k|X| - \ell,
  \]
  which confirms that $H'$ remains $(k, \ell)$-sparse.

  \smallskip
  2.\ Let $T \subseteq V$ denote the set of vertices from which $u$ or $v$ is reachable in $D$.
  Clearly, $\varrho_D(T) = 0$, and by assumption, $\varrho_D(w) = k$ for each $w \in T \setminus \{u,v\}$.
  Hence,
  \[
    i_{H'}(T) = i_H(T) + 1 = \sum_{w \in T} \varrho_D(w) - \varrho_D(T) + 1 \geq (2k - \ell) + k(|T| - 2) + 1 = k|T| - \ell + 1,
  \]
  which violates the sparsity condition, proving that $H'$ is not $(k, \ell)$-sparse.
\end{proof}

\paragraph{Checking edge acceptability via orientations.}
Given a $k$-indegree-bounded orientation $D$ of $H$, Lemma~\ref{lemma:sparse_orientation} yields an algorithm for checking whether the insertion of an edge $uv$ preserves $(k, \ell)$-sparsity.
While $\varrho_D(u) + \varrho_D(v) \geq 2k - \ell$, we find a directed path $P$ in $D$ from a vertex in $\{w \in V \setminus \{u, v\} : \varrho_D(w) < k\}$ to either $u$ or $v$.
If no such path exists, then the insertion of $uv$ violates the sparsity condition.
Otherwise, we reverse the arcs of $P$, which decreases $\varrho_D(u) + \varrho_D(v)$ without violating the $k$-indegree bound.
Repeating this process at most $\ell + 1$ times determines whether the edge $uv$ can be inserted.
$\bullet$
\medskip

We are now ready to describe the augmenting-path algorithm for finding a maximum-weight $(k, \ell)$-sparse subgraph.
The algorithm is built on two key ideas:
\begin{itemize}
\item We iteratively construct the optimal subgraph $H = (V, F)$, considering the edges one by one in the order $e_1, \dots, e_m$.
  Each edge $e$ is inserted into $H$ if and only if the resulting subgraph $H' = (V, F \cup \{e\})$ remains $(k, \ell)$-sparse.
\item We maintain a $k$-indegree-bounded orientation $D$ of the current subgraph $F$.
  For each edge $e$, we apply Lemma~\ref{lemma:sparse_orientation} together with the orientation-based procedure described above to determine whether the edge can be accepted.
  If $H' = (V, F \cup \{e\})$ is $(k, \ell)$-sparse, then we insert $e$ into $H$ and update $D$ by inserting $e$ with an orientation that preserves the $k$-indegree bound.
\end{itemize}

The correctness of this method immediately follows by Theorem~\ref{theorem:matroid} and Lemma~\ref{lemma:sparse_orientation}.
Algorithm~\ref{alg:basic} provides an implementation of this iterative approach.
\begin{algorithm}[H]
  \caption{Naive Pebble Game for $0 \leq \ell < 2k$}\label{alg:basic}
  \begin{algorithmic}[1]
    \Input An undirected graph $G = (V, E)$.
    \Require The edges $e_1, \dots, e_m$ of $G$ are sorted in non-increasing order of weight.
    \Output The edge set of a maximum-weight $(k, \ell)$-sparse subgraph.
    \vspace{0.2em}
    \hrule
    \vspace{0.2em}
    \Procedure{PebbleGame$_{k,\ell}$}{$G = (V, E)$}
      \State $F \gets \emptyset$ \Comment{Initialize the edge set of the optimal subgraph}
      \State $D \gets (V, \emptyset)$ \Comment{Initialize a directed graph on $V$ without arcs}
      \For{$e \gets e_1, \dots, e_m$} \Comment{Process the edges in non-increasing order of weight}\label{alg:basic:for}
        \State $u, v \gets \operatorname{endpoints}(e)$
      	\While{$\varrho_D(u) + \varrho_D(v) \geq 2k - \ell$}\label{alg:basic:while}
      	  \State find a path $P$ in $D$ from $\{w \in V \setminus \{u, v\} : \varrho_D(w) < k\}$ to $\{u, v\}$\label{alg:basic:findPath}
      	  \If{no such path exists}
      	    \State \Break \Comment{Exit the \texttt{while} loop}\label{alg:basic:reject}
      	  \EndIf
      	  \State reverse the arcs of $P$ in $D$ \Comment{Decrease the indegree of $u$ or $v$}\label{alg:basic:reverse}
      	\EndWhile
      	\If{$\varrho_D(u) + \varrho_D(v) < 2k - \ell$}\label{alg:basic:acceptIf}
          \State $F \gets F \cup \{e\}$ \Comment{Accept edge $e$}
      	  \If{$\varrho_D(v) < k$}
      	    \State $D \gets D \cup \{uv\}$ \Comment{Insert an arc from $u$ to $v$ in $D$}
      	  \Else
      	    \State $D \gets D \cup \{vu\}$ \Comment{Insert an arc from $v$ to $u$ in $D$}
      	  \EndIf
      	\EndIf\label{alg:basic:acceptEndIf}
      \EndFor
      \State \Return $F$ \Comment{Return the edge set of the optimal subgraph}
    \EndProcedure
  \end{algorithmic}
\end{algorithm}
\begin{proposition}
Algorithm~\ref{alg:basic} runs in $O(nm)$ time.
\end{proposition}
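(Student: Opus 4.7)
The plan is to bound the cost of a single iteration of the main \texttt{for} loop by $O(n)$, and then multiply by the number of edges.

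First, I would observe that throughout the algorithm the directed graph $D$ is a $k$-indegree-bounded orientation of the current accepted set $F$, so
\[
  |F| = |A(D)| \leq k|V| - \ell = O(n).
\]
Consequently, a breadth-first or depth-first search starting from the set $\{w \in V \setminus \{u,v\} : \varrho_D(w) < k\}$ can locate a directed path to $\{u,v\}$ in line~\ref{alg:basic:findPath} in $O(n + |F|) = O(n)$ time, and reversing such a path in line~\ref{alg:basic:reverse} also takes $O(n)$ time.

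Next, I would bound the number of iterations of the \texttt{while} loop per edge. If $P = w \to x_1 \to \dots \to x_t \to z$ is the path found in $D$ (with $z \in \{u,v\}$ and $\varrho_D(w) < k$), then reversing $P$ increases $\varrho_D(w)$ by $1$, decreases $\varrho_D(z)$ by $1$, and leaves the indegrees of all internal vertices unchanged. Hence each iteration strictly decreases $\varrho_D(u) + \varrho_D(v)$ by exactly $1$. Since initially $\varrho_D(u) + \varrho_D(v) \leq 2k$ and the loop terminates once this quantity drops below $2k - \ell$, there are at most $\ell + 1 = O(1)$ iterations per edge.

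Combining these bounds, processing a single edge costs $O(\ell + 1) \cdot O(n) = O(n)$ time, and the loop on line~\ref{alg:basic:for} runs $m$ times, yielding an overall running time of $O(nm)$. The only step that requires more than routine bookkeeping is the argument that one path reversal decreases $\varrho_D(u) + \varrho_D(v)$ by exactly $1$; this is what keeps the number of inner iterations constant and prevents an extra factor of $n$ from appearing. Everything else is a direct consequence of the $k$-indegree bound on $D$.
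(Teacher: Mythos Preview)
Your proposal is correct and follows the same approach as the paper's proof: bound the number of \texttt{while}-loop iterations per edge by $\ell+1$ and the cost of each iteration by $O(n)$. Your write-up is simply more explicit about why these bounds hold (the $k$-indegree bound giving $|A(D)|=O(n)$, and the effect of a path reversal on $\varrho_D(u)+\varrho_D(v)$), whereas the paper states them without further justification.
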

\begin{proof}
The \texttt{while} loop in line~\ref{alg:basic:while} takes at most $\ell + 1$ iterations per edge, and each augmenting path $P$ can be found via a single traversal of $D$ in $O(n)$ time.
Thus, each edge is processed in $O(n)$ time overall, yielding a total running time of $O(nm)$.
\end{proof}

\section{The component-based algorithm}\label{sec:comp_algo}
In Algorithm~\ref{alg:basic}, checking for each edge whether it can be inserted into $D$ without violating the sparsity condition takes $O(nm)$ time in total.
If we instead had access to an oracle that could determine in constant time whether an edge can be accepted, then augmenting paths would only need to be computed for the $O(n)$ edges actually inserted, reducing the total running time to $O(n^2 + m)$.
We construct such an oracle by leveraging the following structural properties of $(k, \ell)$-blocks and $(k, \ell)$-components.

\begin{lemma}
\label{lemma:intersection}
Let $X$ and $Y$ be $(k, \ell)$-blocks of a $(k, \ell)$-sparse graph $H = (V, F)$ such that $|X \cap Y| \geq 2$.
Then both $X \cap Y$ and $X \cup Y$ are $(k, \ell)$-blocks.
\end{lemma}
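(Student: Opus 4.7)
The plan is to run the standard submodular-counting argument on the edge-count function $i_H$, using the fact that $X$ and $Y$ are already tight and the hypothesis $|X \cap Y| \geq 2$ to eliminate the $\max\{\cdot,0\}$ complication.

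First I would record the submodularity of $i_H$: since every edge induced by either $X \cap Y$ or $X \cup Y$ contributes at least as much to $i_H(X) + i_H(Y)$, one has
\[
  i_H(X) + i_H(Y) \leq i_H(X \cap Y) + i_H(X \cup Y).
\]
Because $X$ and $Y$ are $(k,\ell)$-blocks with $|X|, |Y| \geq |X \cap Y| \geq 2$, and since $\ell < 2k$ implies $k|Z| - \ell \geq 2k - \ell > 0$ for any set $Z$ with $|Z| \geq 2$, the tightness of $X$ and $Y$ reads $i_H(X) = k|X| - \ell$ and $i_H(Y) = k|Y|-\ell$. Adding and using $|X|+|Y| = |X \cap Y| + |X \cup Y|$ gives
\[
  i_H(X) + i_H(Y) = k(|X \cap Y| + |X \cup Y|) - 2\ell.
\]

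Next I would invoke the sparsity condition applied to $X \cap Y$ and $X \cup Y$. Since $|X \cup Y| \geq |X \cap Y| \geq 2$, the $\max$ in the sparsity bound evaluates to the linear term, yielding $i_H(X \cap Y) \leq k|X \cap Y| - \ell$ and $i_H(X \cup Y) \leq k|X \cup Y| - \ell$. Summing these two bounds gives exactly $k(|X \cap Y| + |X \cup Y|) - 2\ell = i_H(X) + i_H(Y)$. Chaining with submodularity,
\[
  i_H(X) + i_H(Y) \leq i_H(X \cap Y) + i_H(X \cup Y) \leq i_H(X) + i_H(Y),
\]
so equality holds throughout. In particular, each of the two sparsity bounds is attained individually, which means $X \cap Y$ and $X \cup Y$ both induce tight subgraphs, i.e., they are $(k,\ell)$-blocks.

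There is no real obstacle here; the only subtle point is making sure the $\max\{k|Z|-\ell,0\}$ in the sparsity definition collapses to the linear branch, which is exactly where the hypothesis $|X \cap Y| \geq 2$ (together with $\ell < 2k$) is used. The submodular sandwich then forces simultaneous equality on intersection and union.
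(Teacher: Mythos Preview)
Your argument is correct and essentially identical to the paper's: both sandwich $i_H(X\cap Y)+i_H(X\cup Y)$ between $i_H(X)+i_H(Y)$ (via supermodularity of $i_H$) and the sparsity upper bounds, forcing equality throughout. One terminological slip: the inequality you write is \emph{super}modularity of $i_H$, not submodularity; the paper names it correctly, but your inequality and its justification are right, and you are slightly more explicit than the paper about why $|X\cap Y|\ge 2$ and $\ell<2k$ make the $\max$ in the sparsity bound irrelevant.
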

\begin{proof}
Let $X, Y \subseteq V$ be distinct blocks of $H$ such that $|X \cap Y| \geq 2$.
Recall that $i_H(X)$ denotes the number of edges induced by $X$.
By the supermodularity of this function, we have
\[
  k(|X| + |Y|) - 2 \ell = i_H(X) + i_H(Y) \leq i_H(X \cap Y) + i_H(X \cup Y) \leq k|X \cap Y| + k|X \cup Y| - 2 \ell = k(|X| + |Y|) - 2 \ell.
\]
Since the left- and right-hand sides are equal, all inequalities must hold with equality, and hence $X \cap Y$ and $X \cup Y$ are blocks.
\end{proof}

\begin{corollary}
\label{corollary:comp_intersection}
Any two distinct $(k, \ell)$-components in a $(k, \ell)$-sparse graph $H = (V, F)$ intersect in at most one vertex.
\end{corollary}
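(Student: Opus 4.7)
The plan is to derive the corollary as a direct contrapositive application of Lemma~\ref{lemma:intersection}. Suppose for contradiction that two distinct $(k, \ell)$-components $X$ and $Y$ of $H$ satisfy $|X \cap Y| \geq 2$. The goal is to produce a single $(k, \ell)$-block that strictly contains at least one of them, contradicting the inclusion-wise maximality built into the definition of a $(k, \ell)$-component.

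The key step is to invoke Lemma~\ref{lemma:intersection} on $X$ and $Y$, which immediately gives that $X \cup Y$ is a $(k, \ell)$-block. Since $X \neq Y$, at least one of the strict inclusions $X \subsetneq X \cup Y$ or $Y \subsetneq X \cup Y$ holds, so $X \cup Y$ is a $(k, \ell)$-block strictly containing a $(k, \ell)$-component. This contradicts the maximality in the definition of a $(k, \ell)$-component, completing the proof.

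There is essentially no obstacle here: once Lemma~\ref{lemma:intersection} is in place, the corollary reduces to the observation that the union of two blocks with sufficiently large intersection is again a block, which cannot happen for two distinct maximal blocks. The only thing to be careful about is handling the trivial case $X = Y$ correctly (excluded by the distinctness assumption) and noting that the hypothesis $|X \cap Y| \geq 2$ is precisely what we want to rule out, so the logical direction of the argument is by contradiction rather than direct implication.
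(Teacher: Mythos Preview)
Your proof is correct and is essentially identical to the paper's own argument: both invoke Lemma~\ref{lemma:intersection} to conclude that $X \cup Y$ is a $(k,\ell)$-block and then use the inclusion-wise maximality of components to force $X = Y$. The only difference is cosmetic phrasing (you frame it as a contradiction from $X \neq Y$, the paper derives $X = Y$ directly).
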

\begin{proof}
Let $X$ and $Y$ be components with $|X \cap Y| \geq 2$.
By the previous lemma, $X \cup Y$ is also a block.
Since $X \cup Y$ contains both $X$ and $Y$, it follows that $X = Y$, otherwise $X$ or $Y$ is not inclusion-wise maximal.
\end{proof}

\begin{corollary}
\label{corollary:comp_arises_unique}
Inserting a single edge into a $(k, \ell)$-sparse graph $H = (V, F)$ can form at most one new $(k, \ell)$-component.
\end{corollary}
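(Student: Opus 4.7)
The plan is to show that every newly formed $(k,\ell)$-component must contain both endpoints of the inserted edge, and then to invoke Corollary~\ref{corollary:comp_intersection} on the new graph $H'$.

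First I would set up notation: let $e = uv$ be the inserted edge and let $H' = (V, F \cup \{e\})$, which we assume is still $(k,\ell)$-sparse (otherwise the notion of ``new component'' is vacuous or can be interpreted on the sparse part). Call a component $X$ of $H'$ \emph{new} if $X$ is not a $(k,\ell)$-component of $H$. I want to argue that any new component contains $\{u,v\}$. There are two ways $X$ could fail to be a component of $H$: either (a) $X$ is not a $(k,\ell)$-block of $H$, or (b) $X$ is a block of $H$ but not inclusion-wise maximal among blocks of $H$.

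In case (a), $i_H(X) \le k|X| - \ell - 1$, whereas $i_{H'}(X) = k|X| - \ell$ since $X$ is a block of $H'$; so the extra edge must be induced by $X$, giving $u, v \in X$. In case (b), pick a block $Y$ of $H$ with $Y \supsetneq X$. Then $i_H(Y) = k|Y| - \ell$, and since $H'$ is $(k,\ell)$-sparse, $i_{H'}(Y) \le k|Y|-\ell$; combined with $i_{H'}(Y) \ge i_H(Y)$, we get equality, so $Y$ is a block of $H'$ that strictly contains $X$. This contradicts the maximality of $X$ in $H'$, so case (b) cannot occur for a new component. Hence every new component contains both $u$ and $v$.

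Now suppose for contradiction that two distinct new components $X$ and $Y$ arise. By the above, $\{u,v\} \subseteq X \cap Y$, so $|X \cap Y| \ge 2$. But Corollary~\ref{corollary:comp_intersection}, applied to the $(k,\ell)$-sparse graph $H'$, tells us that any two distinct $(k,\ell)$-components of $H'$ share at most one vertex, a contradiction. Therefore at most one new component can be formed.

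The only mildly subtle point, and the main thing to get right, is ruling out case (b): one must notice that a block $Y$ of $H$ remains a block of $H'$ regardless of whether $e$ lies inside $Y$, because the tightness of $Y$ in $H$ combined with the sparsity bound in $H'$ forces equality. Once this observation is in place, the rest is an immediate application of the previously established intersection corollary.
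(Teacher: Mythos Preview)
Your proof is correct and follows essentially the same approach as the paper: show that any newly formed component contains both endpoints of the inserted edge, then apply Corollary~\ref{corollary:comp_intersection} to the new graph. The paper's proof is terser --- it simply asserts that $u,v \in X \cap Y$ without the case analysis --- whereas you carefully justify this and, in particular, explicitly rule out the possibility that a new component was already a (non-maximal) block of $H$; this extra care is a genuine improvement in rigor over the paper's one-line argument.
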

\begin{proof}
Let $uv$ be the edge to be inserted, and let $X$ and $Y$ be two newly formed components.
Since $u, v \in X \cap Y$, we clearly have $|X \cap Y| \geq 2$.
By Corollary~\ref{corollary:comp_intersection}, this implies that $X = Y$.
\end{proof}

\begin{corollary}
\label{corollary:comp_tot_size}
The total size of the $(k, \ell)$-components in a $(k, \ell)$-sparse graph $H = (V, F)$ is $O(n)$.
\end{corollary}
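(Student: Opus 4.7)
The plan is to partition the components by size. There are at most $n$ singleton components, contributing at most $n$ to the total, so the task reduces to bounding the contribution from components $C$ with $|C|\geq 2$. I would do this via a simple edge-counting argument.

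Two observations drive the argument. First, tightness together with the standing assumption $\ell < 2k$: for any component $C$ with $|C|\geq 2$ we have $k|C|-\ell \geq 2k - \ell \geq 1$, so $C$ induces exactly $k|C|-\ell \geq 1$ edges of $F$. Second, disjointness of these induced edge sets: Corollary~\ref{corollary:comp_intersection} implies that no edge of $F$ can be induced by two distinct components, since the two endpoints of such an edge would force those components to share at least two vertices. Thus the edge sets of the components of size at least two are pairwise disjoint, each is nonempty, and all of them lie inside $F$.

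Let $\mathcal{C}_{\geq 2}$ denote the set of components of size at least two. Disjointness and nonemptiness give $|\mathcal{C}_{\geq 2}| \leq \sum_{C\in\mathcal{C}_{\geq 2}} |E(C)| \leq |F|$, while summing the identity $|E(C)| = k|C| - \ell$ yields
\[
k\sum_{C\in\mathcal{C}_{\geq 2}} |C| \;=\; \sum_{C\in\mathcal{C}_{\geq 2}} |E(C)| + \ell\,|\mathcal{C}_{\geq 2}| \;\leq\; (1+\ell)\,|F|.
\]
Since the whole graph is $(k,\ell)$-sparse we also have $|F|\leq \max\{kn-\ell,0\} = O(n)$, so $\sum_{C\in\mathcal{C}_{\geq 2}} |C| = O(n)$, and adding the at-most-$n$ contribution of singleton components finishes the proof.

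I do not anticipate a serious obstacle. The only real conceptual step is recognizing that Corollary~\ref{corollary:comp_intersection} already forces the induced edge sets of distinct components to be disjoint, so both the number of large components and their total size can be charged directly to $|F|$, which is itself $O(n)$ by sparsity.
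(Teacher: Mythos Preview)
Your proof is correct and follows essentially the same approach as the paper: both arguments use Corollary~\ref{corollary:comp_intersection} to get edge-disjointness of the components, the tightness identity $|E(C)|=k|C|-\ell$, and the bound $|F|=O(n)$. The only cosmetic difference is that you separate out singletons explicitly, while the paper handles all components uniformly via the inequality $|C|\leq (t+\ell)/k < t/k+2$ (which also covers the $t=0$ case).
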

\begin{proof}
By Corollary~\ref{corollary:comp_intersection}, each edge in $F$ is induced by at most one component.
Since $|F| \leq \max\{ kn - \ell, 0 \} = O(n)$, the number of components is $O(n)$.
A component inducing $t$ edges has at most $(t + \ell)/k < t/k + 2$ vertices, so the sum of the sizes of all components is at most $|F|/k + 2 \cdot O(n) = O(n)$.
\end{proof}

Let us now revisit our original problem and modify Algorithm~\ref{alg:basic} to achieve a faster running time.
Our goal is to decide in constant time whether an edge $uv$ can be inserted into the current subgraph $H = (V, F)$ without violating the sparsity condition.
Clearly, $H' = (V, F \cup \{uv\})$ is $(k, \ell)$-sparse if and only if no $(k, \ell)$-component in $H$ contains both $u$ and $v$.
To determine whether $uv$ can be accepted, the algorithm must therefore check for the existence of such a $(k, \ell)$-component.
If $uv$ is inserted, then a new $(k, \ell)$-component containing both $u$ and $v$ may form, and some existing $(k, \ell)$-components may cease being $(k, \ell)$-components, as they remain $(k, \ell)$-blocks but are no longer maximal.
We introduce three subroutines to handle these operations efficiently:
\begin{itemize}
  \item \textsc{InCommonComponent}$(u, v)$ --- checks whether there is a $(k, \ell)$-component containing both $u$ and $v$.
  \item \textsc{FindComponent}$(D, u, v)$ --- returns the new $(k, \ell)$-component formed after inserting $uv$, or $\emptyset$ if no such $(k, \ell)$-component appears.
  \item \textsc{UpdateComponents}$(C)$ --- updates the data structures representing the current set of $(k, \ell)$-components after a new $(k, \ell)$-component $C$ is identified.
\end{itemize}
Temporarily treating these as black boxes, we obtain the following high-level algorithm.

\begin{algorithm}[H]
\caption{Component Pebble Game for $0 \leq \ell < 2k$}
\label{alg:comp}
  \begin{algorithmic}[1]
    \Input An undirected graph $G = (V, E)$.
    \Require The edges $e_1, \dots, e_m$ of $G$ are sorted in non-increasing order of weight.
    \Output The edge set of a maximum-weight $(k, \ell)$-sparse subgraph.
    \vspace{0.2em}
    \hrule
    \vspace{0.2em}
    \Procedure{ComponentPebbleGame$_{k,\ell}$}{$G = (V, E)$}
    \State $F \gets \emptyset$ \Comment{Initialize the edge set of the optimal subgraph}
    \State $D \gets (V, \emptyset)$ \Comment{Initialize a directed graph on $V$ without arcs}
    \For{$e \gets e_1, \dots, e_m$} \Comment{Process the edges in non-increasing order of weight}\label{alg:comp:for}
        \State $u, v \gets \operatorname{endpoints}(e)$
        \If{\Not \Call{InCommonComponent$_{k,\ell}$}{$u, v$}} \Comment{No component contains both $u$ and $v$?}\label{alg:comp:check_in_comp}
          \While{$\varrho_D(u) + \varrho_D(v) \geq 2k - \ell$}
            \State find a path $P$ in $D$ from $\{w \in V \setminus \{u, v\}$ : $\varrho_D(w) < k\}$ to $\{u, v\}$\label{alg:comp:path_find}
            \State reverse the arcs of $P$ in $D$ \Comment{Decrease the indegree of $u$ or $v$}\label{alg:comp:reverse}
          \EndWhile
          \State $F \gets F \cup \{e\}$ \Comment{Accept edge $e$}
          \If{$\varrho_D(v) < k$}
            \State $D \gets D \cup \{uv\}$ \Comment{Insert an arc from $u$ to $v$ in $D$}
          \Else
            \State $D \gets D \cup \{vu\}$ \Comment{Insert an arc from $v$ to $u$ in $D$}
          \EndIf
          \State $C \gets$ \Call{FindComponent$_{k,\ell}$}{$D, u, v$} \Comment{Find the new component if there is one}
          \If{$C \neq \emptyset$} \Comment{New component found?}
            \State \Call{UpdateComponents$_{k,\ell}$}{$C$} \Comment{Update component data}
          \EndIf
        \EndIf
    \EndFor
    \State \Return $F$ \Comment{Return the set of accepted edges}
    \EndProcedure
  \end{algorithmic}
\end{algorithm}

\begin{note} In line \ref{alg:comp:path_find}, we can always find such a path $P$.
  This follows from Lemma~\ref{lemma:sparse_orientation} and the fact that the edge $uv$ can be inserted into $H$.
\end{note}

The subroutines \textsc{InCommonComponent} and \textsc{UpdateComponents} will be described in full detail in the next section.
In Section~\ref{sec:spec_cases}, we present more efficient implementations tailored to two special cases.
Before presenting the implementation of \textsc{FindComponent} --- which is shared across all special cases --- we first analyze how the $(k, \ell)$-blocks of a $(k, \ell)$-sparse graph change when a new edge is inserted.
This analysis relies on the following two lemmas.
\begin{lemma}
\label{lemma:opt_cond}
Let $D$ be a $k$-indegree-bounded orientation of a $(k, \ell)$-sparse graph $H = (V, F)$ and let $u, v \in V$ be distinct vertices such that $\varrho_D(u) + \varrho_D(v) \leq 2k - \ell$.
A set $X \subseteq V$ containing both $u$ and $v$ is a $(k, \ell)$-block if and only if $\varrho_D(u) + \varrho_D(v) = 2k - \ell$, $\varrho_D(X) = 0$, and each vertex in $X \setminus \{u, v\}$ has indegree $k$.
\end{lemma}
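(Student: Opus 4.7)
The plan is to base everything on the standard double-counting identity
\[
    i_H(X) = \sum_{w \in X} \varrho_D(w) - \varrho_D(X),
\]
where $\varrho_D(X)$ denotes the number of arcs of $D$ entering $X$ from $V \setminus X$. This holds because every edge induced by $X$ contributes exactly $1$ to the right-hand side (its head lies in $X$, and it is not counted by $\varrho_D(X)$), while every edge with exactly one endpoint in $X$ contributes $1$ to the indegree sum and $1$ to $\varrho_D(X)$, cancelling out.

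Once this identity is in place, the proof is an equality-chase driven by two facts: the hypothesis $\varrho_D(u) + \varrho_D(v) \leq 2k - \ell$, and the assumption that $D$ is $k$-indegree-bounded, which gives $\varrho_D(w) \leq k$ for each $w \in X \setminus \{u,v\}$. Splitting the sum over $X$ yields
\[
    i_H(X) = \bigl(\varrho_D(u) + \varrho_D(v)\bigr) + \sum_{w \in X \setminus \{u,v\}} \varrho_D(w) - \varrho_D(X) \;\leq\; (2k - \ell) + k(|X| - 2) - 0 \;=\; k|X| - \ell.
\]

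For the ``if'' direction, I would plug the three stated conditions into this chain and observe that both inequalities become equalities, so $i_H(X) = k|X| - \ell$ and $X$ is a block. For the ``only if'' direction, if $X$ is a block then $i_H(X) = k|X| - \ell$, which forces \emph{every} slack term in the chain to vanish: $\varrho_D(u) + \varrho_D(v) = 2k - \ell$, $\varrho_D(w) = k$ for every $w \in X \setminus \{u,v\}$, and $\varrho_D(X) = 0$.

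I do not foresee a real obstacle; the lemma is essentially a one-line calculation once the indegree identity is noted. The only minor point to mention is that $X \setminus \{u,v\}$ may be empty (when $X = \{u,v\}$), but the estimate still goes through with an empty sum since then $k(|X|-2)=0$.
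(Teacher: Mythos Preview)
Your proposal is correct and follows essentially the same argument as the paper: both proofs use the identity $i_H(X) = \sum_{w \in X} \varrho_D(w) - \varrho_D(X)$, bound the right-hand side by $k|X|-\ell$ via the indegree constraints, and observe that equality forces all three conditions. Your write-up is slightly more verbose (you justify the double-counting identity and note the $|X|=2$ edge case), but the mathematical content is identical.
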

\begin{proof}
  A set $X$ containing $u$ and $v$ is a block if and only if $i_H(X) = k|X| - \ell$.
  By the definition of $i_H(X)$, we have
  $$
    i_H(X) = \sum_{w \in X} \varrho_D(w) - \varrho_D(X) \leq \sum_{w \in X} \varrho_D(w) \leq k(|X| - 2) + 2k - \ell = k|X| - \ell.
  $$
Equality holds if and only if $\varrho_D(X) = 0$, $\varrho_D(u) + \varrho_D(v) = 2k - \ell$, and each vertex in $X \setminus \{u, v\}$ has indegree~$k$.
\end{proof}

Now we derive a necessary and sufficient condition for the existence of a component containing two chosen vertices $u$ and $v$.
A similar lemma was proven as Lemma~2 in~\cite{berg2003algorithms}.
\begin{lemma}
  \label{lemma:new_comp_arises}
  Let $D$ be a $k$-indegree-bounded orientation of a $(k, \ell)$-sparse graph $H = (V, F)$ and let $u, v \in V$ be distinct vertices such that $\varrho_D(u) + \varrho_D(v) = 2k - \ell$.
  Define $S = \{w \in V \setminus \{u, v\} : \varrho_D(w) < k\}$, and let $T$ be the set of vertices not reachable from $S$ in $D$.
  There exists a $(k, \ell)$-component containing $u$ and $v$ if and only if both $u$ and $v$ lie in $T$.
  If such a component exists, then it is exactly $T$.
\end{lemma}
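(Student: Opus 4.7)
The plan is to use Lemma~\ref{lemma:opt_cond} as the bridge between orientation-based properties and block structure, and then invoke Corollary~\ref{corollary:comp_intersection} to obtain uniqueness. I will prove the two directions separately and extract the identification of the component with $T$ as a by-product.

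\textbf{Forward direction.} Suppose a $(k,\ell)$-component $C$ contains both $u$ and $v$. I would apply Lemma~\ref{lemma:opt_cond} to $C$: since $C$ is a block containing $u,v$, every vertex of $C \setminus \{u,v\}$ has indegree exactly $k$ in $D$, and $\varrho_D(C) = 0$. The first fact shows $C \cap S = \emptyset$, and the second shows no arc enters $C$ from outside, so no vertex outside $C$ can reach any vertex of $C$. In particular, no vertex of $S$ reaches any vertex of $C$, so $C \subseteq T$, giving $u,v \in T$.

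\textbf{Reverse direction.} Assume $u, v \in T$. The key step is to verify that $T$ itself satisfies the hypotheses of Lemma~\ref{lemma:opt_cond}. First, $\varrho_D(T) = 0$: an arc from $x \notin T$ into $y \in T$ would extend a path from $S$ to $x$ into one ending at $y$, contradicting $y \in T$. Second, every vertex $w \in T \setminus \{u,v\}$ has indegree $k$: otherwise $w \in S$, but then $w$ is reachable from itself and so $w \notin T$. Combined with the standing assumption $\varrho_D(u) + \varrho_D(v) = 2k - \ell$, Lemma~\ref{lemma:opt_cond} gives that $T$ is a $(k,\ell)$-block.

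\textbf{Maximality and identification.} It remains to show $T$ is inclusion-wise maximal, i.e.\ a component. Let $C$ be any component containing $T$; such a $C$ exists because $T$ is a block, and it contains $u,v$. Applying the forward direction to $C$ yields $C \subseteq T$, hence $C = T$, so $T$ is itself a component. This simultaneously proves the ``if'' direction and the final claim that whenever a component containing $u$ and $v$ exists, it must equal $T$ (uniqueness being guaranteed by Corollary~\ref{corollary:comp_intersection}, since any two such components would share at least the two vertices $u,v$).

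\textbf{Expected difficulty.} No step looks technically hard; the only subtlety is the bookkeeping needed to conclude that $T$ is maximal rather than merely a block, which is why I route through an arbitrary component $C \supseteq T$ and reuse the forward direction. Everything else reduces to a direct application of Lemma~\ref{lemma:opt_cond} and the definition of reachability in $D$.
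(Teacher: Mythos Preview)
Your proof is correct and follows essentially the same route as the paper: both directions reduce to Lemma~\ref{lemma:opt_cond}, and the identification $C=T$ comes from combining $C\subseteq T$ with the fact that $T$ is itself a block. The only cosmetic difference is that the paper argues maximality from the component side (given a component $X\ni u,v$, show $X\subseteq T$ and then use maximality of $X$), whereas you argue it from the block side (extend $T$ to a component and reuse the forward direction); the content is the same.
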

\begin{proof}
Suppose that $u, v \in T$.
Since $S$ and $T$ are disjoint, we have $\varrho_D(w) = k$ for each $w \in T \setminus \{u,v\}$.
Furthermore, $T$ has no incoming edges, so $\varrho_D(T)=0$.
By Lemma~\ref{lemma:opt_cond}, $T$ is a $(k, \ell)$-block, and hence $u$ and $v$ lie in a common component.

Conversely, suppose that there is a component $X$ containing both $u$ and $v$.
By Lemma~\ref{lemma:opt_cond}, we have $\varrho_D(X)=0$, and $\varrho_D(w)=k$ for each $w \in X \setminus \{u,v\}$.
Thus, the vertices in $X$ are unreachable from $S$, implying that $u,v \in X \subseteq T$.
Note that $T$ itself is a $(k, \ell)$-block, as it satisfies the conditions of Lemma~\ref{lemma:opt_cond}.
From $X \subseteq T$ and the maximality of $X$, it follows that $X = T$.
\end{proof}



Using Lemmas \ref{lemma:opt_cond} and \ref{lemma:new_comp_arises}, the \textsc{FindComponent} subroutine can be implemented as follows.

\begin{algorithm}[H]
\caption{Find Component for $0 \leq \ell < 2k$}
\label{alg:find_comp}
  \begin{algorithmic}[1]
    \Input A directed graph $D = (V, A)$, and two distinct vertices $u, v \in V$.
    \Require $D$ is a $k$-indegree-bounded orientation of a $(k, \ell)$-sparse graph $H$, and $\varrho_D(u) + \varrho_D(v) \leq 2k - \ell$.
    \Output The $(k, \ell)$-component in $H$ containing both $u$ and $v$ if it exists; $\emptyset$ otherwise.
    \vspace{0.2em}
    \hrule
    \vspace{0.2em}
    \Procedure{FindComponent$_{k,\ell}$}{$D = (V, A), u, v$}
    \If{$\varrho_D(u) + \varrho_D(v) < 2k - \ell$} \Comment{Cannot form new component}
      \State \Return $\emptyset$
    \EndIf
    \State $Z \gets $ set of vertices reachable from $\{w \in V \setminus \{u, v\} : \varrho_D(w) < k\}$ in $D$
    \State $T \gets V \setminus Z$
    \If{$u, v \in T$} \Comment{New component formed?}
      \State \Return $T$
    \Else
      \State \Return $\emptyset$
    \EndIf
    \EndProcedure
  \end{algorithmic}
\end{algorithm}

\paragraph{Implementation details for \textnormal{\textsc{FindComponent}}.}
The subroutine above can be implemented using a single graph traversal in $O(n)$ time.
By Lemmas \ref{lemma:opt_cond} and \ref{lemma:new_comp_arises}, the procedure returns the unique new $(k, \ell)$-component if one is formed, and $\emptyset$ otherwise.
$\bullet$
\medskip

\begin{example}
  Consider the graph in Figure~\ref{fig:compMerge} with $(k, \ell) = (2, 3)$, and suppose we want to insert the edge $uv$.
  First, we find a $2$-indegree-bounded orientation $D$ with $\varrho_D(u) + \varrho_D(v) < 1$ and add the arc $uv$ to $D$.
  Next, we run a graph traversal from the set of vertices with indegree lower than $2$.
  Let $Z$ denote the set of reached vertices, and define $T = V \setminus Z$.
  Since $u,v\in T$, the newly formed component is exactly $T$.
  Finally, we delete any $(k, \ell)$-components fully contained in $T$, as they are no longer inclusion-wise maximal.

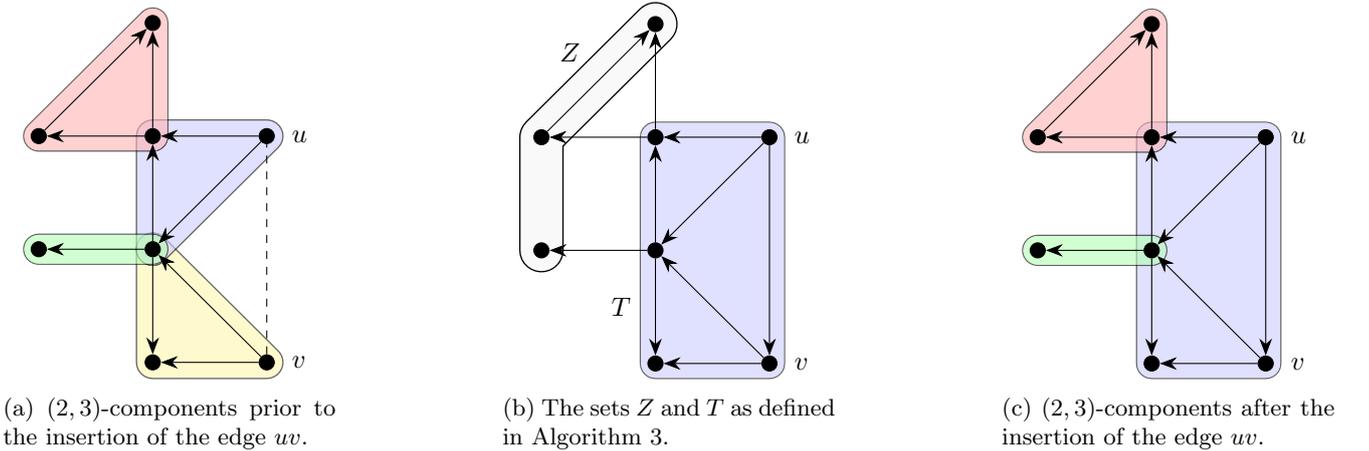
\begin{figure}[H]
  \begin{subfigure}[t]{.25\textwidth}
    \centering
    \begin{tikzpicture}[scale=1.5]
      \node[draw, circle, fill=black, inner sep=2pt] (v0) at (0, 0) {};
      \node[draw, circle, fill=black, inner sep=2pt] (v1) at (0, 1) {};
      \node[draw, circle, fill=black, inner sep=2pt, label={[xshift=1mm]right:$v$}] (v2) at (1, 0) {};
      \node[draw, circle, fill=black, inner sep=2pt] (v3) at (0, 2) {};
      \node[draw, circle, fill=black, inner sep=2pt, label={[xshift=1mm]right:$u$}] (v4) at (1, 2) {};
      \node[draw, circle, fill=black, inner sep=2pt] (v5) at (-1, 1) {};
      \node[draw, circle, fill=black, inner sep=2pt] (v6) at (0, 3) {};
      \node[draw, circle, fill=black, inner sep=2pt] (v7) at (-1, 2) {};

      \draw[\niceArrow] (v2) to (v0);
      \draw[\niceArrow] (v2) to (v1);
      \draw[\niceArrow] (v4) to (v1);
      \draw[\niceArrow] (v4) to (v3);
      \draw[\niceArrow] (v1) to (v0);
      \draw[\niceArrow] (v1) to (v3);
      \draw[\niceArrow] (v1) to (v5);
      \draw[\niceArrow] (v3) to (v7);
      \draw[\niceArrow] (v3) to (v6);
      \draw[\niceArrow] (v7) to (v6);
      \draw[dashed] (v2) to (v4);

      \begin{pgfonlayer}{background}
        \begin{scope}[opacity=.6, transparency group]
          \path[line cap=round, line join=round, line width=0.5pt, double=yellow!40, double distance=4.1mm, draw=black]
            (v1.center) to (v2.center) to (v0.center) to (v1.center);
          \highlight{3.8mm}{yellow!40, fill=yellow!40}{(v1.center) to (v2.center) to (v0.center)}
        \end{scope}
        \begin{scope}[opacity=.6, transparency group]
          \path[line cap=round, line join=round, line width=0.5pt, double=blue!20, double distance=4.1mm, draw=black]
            (v1.center) to (v3.center) to (v4.center) to (v1.center);
          \highlight{3.8mm}{blue!20, fill=blue!20}{(v1.center) to (v3.center) to (v4.center)}
        \end{scope}
        \begin{scope}[opacity=.6, transparency group]
          \path[line cap=round, line join=round, line width=0.5pt, double=red!30, double distance=3.8mm, draw=black]
            (v3.center) to (v6.center) to (v7.center) to (v3.center);
          \highlight{3.8mm}{red!30, fill=red!30}{(v3.center) to (v6.center) to (v7.center)}
        \end{scope}
        \begin{scope}[opacity=.6, transparency group]
          \path[line cap=round, line join=round, line width=0.5pt, double=green!30, double distance=3.8mm, draw=black]
            (v1.center) to (v5.center);
          \highlight{3.8mm}{green!30, fill=green!30}{(v1.center) to (v5.center)}
        \end{scope}
      \end{pgfonlayer}
    \end{tikzpicture}
    \caption{$(2, 3)$-components prior to the insertion of the edge $uv$.}
  \end{subfigure}
  \hfill
  \begin{subfigure}[t]{.25\textwidth}
    \centering
    \begin{tikzpicture}[scale=1.5]
      \node[draw, circle, fill=black, inner sep=2pt] (v0) at (0, 0) {};
      \node[draw, circle, fill=black, inner sep=2pt] (v1) at (0, 1) {};
      \node[draw, circle, fill=black, inner sep=2pt, label={[xshift=1mm]right:$v$}] (v2) at (1, 0) {};
      \node[draw, circle, fill=black, inner sep=2pt] (v3) at (0, 2) {};
      \node[draw, circle, fill=black, inner sep=2pt, label={[xshift=1mm]right:$u$}] (v4) at (1, 2) {};
      \node[draw, circle, fill=black, inner sep=2pt] (v5) at (-1, 1) {};
      \node[draw, circle, fill=black, inner sep=2pt] (v6) at (0, 3) {};
      \node[draw, circle, fill=black, inner sep=2pt] (v7) at (-1, 2) {};

      \draw[\niceArrow] (v2) to (v0);
      \draw[\niceArrow] (v2) to (v1);
      \draw[\niceArrow] (v4) to (v1);
      \draw[\niceArrow] (v4) to (v3);
      \draw[\niceArrow] (v1) to (v0);
      \draw[\niceArrow] (v1) to (v3);
      \draw[\niceArrow] (v1) to (v5);
      \draw[\niceArrow] (v3) to (v7);
      \draw[\niceArrow] (v3) to (v6);
      \draw[\niceArrow] (v7) to (v6);
      \draw[\niceArrow] (v4) to (v2);

      \node at (-.75, 2.75){$Z$};
      \node at (-.3, .5){$T$};

      \begin{pgfonlayer}{background}
        \begin{scope}[opacity=.6, transparency group]
          \path[line cap=round, line join=round, line width=0.5pt, double=blue!20, double distance=3.8mm, draw=black]
            (v1.center) to (v3.center) to (v4.center) to (v2.center) to (v0.center) to (v1.center);
          \highlight{3.8mm}{blue!20, fill=blue!20}
            {(v1.center) to (v3.center) to (v4.center) to (v2.center) to (v0.center) to (v1.center)}
        \end{scope}
        \begin{scope}
          \path[line cap=round, line join=round, line width=0.5pt, double=gray!5, double distance=5.5mm, draw=black]
            (v5.center) to (v7.center) to (v6.center);
          \highlight{3.8mm}{gray!5, fill=gray!5}
            {(v5.center) to (v7.center) to (v6.center) to (v7.center) to (v5.center)}
        \end{scope}
      \end{pgfonlayer}
    \end{tikzpicture}
    \caption{The sets $Z$ and $T$ as defined in Algorithm~\ref{alg:find_comp}.}
  \end{subfigure}
  \hfill
  \begin{subfigure}[t]{.25\textwidth}
    \centering
    \begin{tikzpicture}[scale=1.5]
      \node[draw, circle, fill=black, inner sep=2pt] (v0) at (0, 0) {};
      \node[draw, circle, fill=black, inner sep=2pt] (v1) at (0, 1) {};
      \node[draw, circle, fill=black, inner sep=2pt, label={[xshift=1mm]right:$v$}] (v2) at (1, 0) {};
      \node[draw, circle, fill=black, inner sep=2pt] (v3) at (0, 2) {};
      \node[draw, circle, fill=black, inner sep=2pt, label={[xshift=1mm]right:$u$}] (v4) at (1, 2) {};
      \node[draw, circle, fill=black, inner sep=2pt] (v5) at (-1, 1) {};
      \node[draw, circle, fill=black, inner sep=2pt] (v6) at (0, 3) {};
      \node[draw, circle, fill=black, inner sep=2pt] (v7) at (-1, 2) {};

      \draw[\niceArrow] (v2) to (v0);
      \draw[\niceArrow] (v2) to (v1);
      \draw[\niceArrow] (v4) to (v1);
      \draw[\niceArrow] (v4) to (v3);
      \draw[\niceArrow] (v1) to (v0);
      \draw[\niceArrow] (v1) to (v3);
      \draw[\niceArrow] (v1) to (v5);
      \draw[\niceArrow] (v3) to (v7);
      \draw[\niceArrow] (v3) to (v6);
      \draw[\niceArrow] (v7) to (v6);
      \draw[\niceArrow] (v4) to (v2);

      \begin{pgfonlayer}{background}
        \begin{scope}[opacity=.6, transparency group]
          \path[line cap=round, line join=round, line width=0.5pt, double=blue!20, double distance=3.8mm, draw=black]
            (v1.center) to (v3.center) to (v4.center) to (v2.center) to (v0.center) to (v1.center);
          \highlight{3.8mm}{blue!20, fill=blue!20}
            {(v1.center) to (v3.center) to (v4.center) to (v2.center) to (v0.center) to (v1.center)}
        \end{scope}
        \begin{scope}[opacity=.6, transparency group]
          \path[line cap=round, line join=round, line width=0.5pt, double=red!30, double distance=3.8mm, draw=black]
            (v3.center) to (v6.center) to (v7.center) to (v3.center);
          \highlight{3.8mm}{red!30, fill=red!30}
            {(v3.center) to (v6.center) to (v7.center)}
        \end{scope}
        \begin{scope}[opacity=.6, transparency group]
          \path[line cap=round, line join=round, line width=0.5pt, double=green!30, double distance=3.8mm, draw=black]
            (v1.center) to (v5.center);
          \highlight{3.8mm}{green!30, fill=green!30}
            {(v1.center) to (v5.center)}
        \end{scope}
      \end{pgfonlayer}
    \end{tikzpicture}
    \caption{$(2, 3)$-components after the insertion of the edge $uv$.}
  \end{subfigure}
  \caption{Impact of inserting edge $uv$ on the $(2, 3)$-components.}\label{fig:compMerge}
\end{figure}
\end{example}

The following lemma, a direct consequence of Corollary~\ref{corollary:comp_arises_unique}, will be used multiple times in the complexity analysis.
\begin{lemma}
\label{lemma:comp_tot_num}
During the execution of Algorithm~\ref{alg:comp}, only $O(n)$ $(k, \ell)$-components form.
\end{lemma}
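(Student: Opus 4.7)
The plan is to bound the number of components that can ever be created during the course of Algorithm~\ref{alg:comp} by counting the events that can create one. First I would observe that no component is created out of thin air: a component exists before Algorithm~\ref{alg:comp} starts only trivially (there are no edges, so the only blocks are singletons if $\ell \geq 0$, and these are never added later because we only track components arising through edge insertions), and during execution, new components only appear as a result of the \textsc{FindComponent} call that follows an accepted edge. Hence it suffices to bound the number of accepted edges and multiply by the number of new components created per accepted edge.

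Next I would invoke Corollary~\ref{corollary:comp_arises_unique}, which says that inserting a single edge into a $(k, \ell)$-sparse graph produces at most one new $(k, \ell)$-component. Since in Algorithm~\ref{alg:comp} the set $F$ is always $(k, \ell)$-sparse (the \textsc{InCommonComponent} check combined with Lemma~\ref{lemma:new_comp_arises} guarantees this), the corollary applies verbatim at each acceptance step. Therefore the number of components created during the whole execution is bounded by the number of accepted edges.

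Finally, I would bound the number of accepted edges by the sparsity count itself: the final $F$ is $(k, \ell)$-sparse on $n$ vertices, so $|F| \leq \max\{kn - \ell, 0\} = O(n)$. Combining the two observations yields at most $O(n)$ newly formed components throughout the execution, as claimed. I do not anticipate any real obstacle here; the only minor subtlety is making explicit that the invariant ``$F$ is $(k, \ell)$-sparse'' is preserved by the algorithm, so that Corollary~\ref{corollary:comp_arises_unique} is applicable at every acceptance step, and this is immediate from the guard in line~\ref{alg:comp:check_in_comp} together with the equivalence noted just before the subroutine list.
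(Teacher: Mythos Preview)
Your argument is correct and essentially identical to the paper's own proof: bound the number of accepted edges by $\max\{kn-\ell,0\}=O(n)$ using the $(k,\ell)$-sparsity of $F$, and then apply Corollary~\ref{corollary:comp_arises_unique} to conclude that each accepted edge yields at most one new component. The only difference is that you spell out the invariant that $F$ remains $(k,\ell)$-sparse, which the paper leaves implicit.
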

\begin{proof}
The algorithm accepts $O(n)$ edges, and by Corollary~\ref{corollary:comp_arises_unique}, the insertion of each edge forms at most one new component.
\end{proof}

We conclude this section with a partial running-time analysis of Algorithm~\ref{alg:comp}.
\begin{proposition}
\label{proposition:alg_comp_complexity}
Ignoring the running times of the \textsc{InCommonComponent} and \textsc{UpdateComponents} subroutines, Algorithm~\ref{alg:comp} runs in $O(n^2 + m)$ time.
\end{proposition}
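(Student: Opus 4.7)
The plan is to split the total work of Algorithm~\ref{alg:comp} into two buckets: a per-edge cost of $O(1)$ that is paid regardless of whether the edge is accepted, and an expensive cost of $O(n)$ that is paid only when the edge is actually inserted. Summing over the two buckets will yield the desired $O(n^2 + m)$ bound.

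First I would handle the cheap bucket. The outer \texttt{for} loop executes $m$ times, and in each iteration, aside from the ignored calls to \textsc{InCommonComponent} and \textsc{UpdateComponents} and the conditionally executed expensive block, only $O(1)$ work is performed (reading the endpoints of $e$ and the conditional test). This contributes $O(m)$ in total, accounting for the $m$ term.

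Next I would bound the expensive bucket, which is entered only when \textsc{InCommonComponent} returns false. In that case, by Lemma~\ref{lemma:sparse_orientation} (combined with the discussion that precedes Algorithm~\ref{alg:comp}), the edge $e$ is inserted into $F$, so the expensive block runs at most once per accepted edge. Since $F$ stays $(k, \ell)$-sparse throughout the execution, $|F| \leq \max\{kn - \ell, 0\} = O(n)$, so the block is entered $O(n)$ times. It remains to bound the cost of a single execution. The \texttt{while} loop iterates at most $\ell + 1 = O(1)$ times: the sum $\varrho_D(u) + \varrho_D(v)$ starts at most $2k$, each successful path reversal strictly decreases it, and by the note following Algorithm~\ref{alg:comp} such a path always exists as long as the loop condition holds. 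Each path search is a single traversal of $D$, costing $O(n)$. The arc insertion into $D$ is $O(1)$, and the \textsc{FindComponent} call runs in $O(n)$ by the implementation detail noted after Algorithm~\ref{alg:find_comp}. Hence one execution of the expensive block costs $O(n)$, and $O(n)$ such executions contribute $O(n^2)$ in total.

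Adding the two buckets gives $O(n^2 + m)$, as claimed. There is no genuine obstacle here; the only subtlety worth flagging explicitly is that the expensive block is indeed reached at most $O(n)$ times, which follows from the sparsity bound on $|F|$ rather than from any per-edge argument, together with the guarantee that every time the \textsc{InCommonComponent} check passes the edge is actually accepted — the latter being what justifies charging the $O(n)$ cost of the augmenting-path search and the \textsc{FindComponent} call to accepted edges only.
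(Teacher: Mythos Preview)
Your proof is correct and follows essentially the same approach as the paper: both split the work into an $O(1)$ per-edge cost (totaling $O(m)$) and an $O(n)$ per-accepted-edge cost (totaling $O(n^2)$, since at most $O(n)$ edges are accepted by sparsity), with the inner \texttt{while} loop bounded by $\ell+1$ iterations and each augmenting-path search and \textsc{FindComponent} call costing $O(n)$. Your write-up is in fact a bit more explicit than the paper's about why the \texttt{while} loop terminates in $O(1)$ iterations and why exactly $O(n)$ edges are accepted.
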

\begin{proof}
Consider the main \texttt{for} loop over the edges in Algorithm~\ref{alg:comp}.
For each rejected edge, the condition in line~\ref{alg:comp:check_in_comp} is not satisfied, so the body of the loop runs in $O(1)$ time.
Among all edges, only $O(n)$ are accepted, and for each such edge, the inner \texttt{while} loop takes at most $\ell + 1$ iterations, each of which runs in $O(n)$ time; additionally, the \textsc{FindComponent} subroutine is called once per accepted edge, also taking $O(n)$ time.
Thus, the total time for processing all accepted edges is $O(n^2)$, and the overall running time is $O(n^2 + m)$, as stated.
\end{proof}

\section{Implementation for the range $0 \leq \ell < 2k$}\label{sec:alg_general}

In this section, we describe the implementation of the \textsc{InCommonComponent} and \textsc{UpdateComponents} subroutines, thereby completing our $O(n^2 + m)$-time algorithm that solves the maximum-weight $(k, \ell)$-sparse subgraph problem for the full range of parameters $0 \leq \ell < 2k$.
We also provide a detailed analysis of the running time of the resulting algorithm.
\medskip

To efficiently track the $(k, \ell)$-components in $H = (V, F)$, we use a bit matrix $M \in \{0, 1\}^{V \times V}$, where $M_{u, v} = 1$ if and only if $u$ and $v$ lie in some common $(k, \ell)$-component.
Alongside $M$, we also maintain a list $\mathcal{C}$ containing the set of the current $(k, \ell)$-components, each represented as a list of vertices.
If $\ell < k$, we initialize $M$ to zero and set $\mathcal{C}$ to the empty list; otherwise, we set $M = I$ and initialize $\mathcal{C}$ with all singleton $(k, \ell)$-components, each consisting of a single vertex.
Throughout Algorithm~\ref{alg:comp}, we maintain $M$ and $\mathcal{C}$ so that the following invariants hold for the current subgraph $H$ at each iteration:
\begin{itemize}
\item For any $u, v \in V$, we have $M_{u, v} = 1$ if and only if there exists a $(k, \ell)$-component in $H$ containing both $u$ and $v$.
\item The list $\mathcal{C}$ contains exactly the $(k, \ell)$-components in $H$.
\end{itemize}

We now present the implementations of the \textsc{InCommonComponent} and \textsc{UpdateComponents} subroutines.

\begin{algorithm}[H]
\caption{In Common Component for $0 \leq \ell < 2k$}
\label{alg:in_common1}
  \begin{algorithmic}[1]
    \Input Two vertices $u, v \in V$.
    \Output \texttt{true} if there exists a $(k, \ell)$-component containing both $u$ and $v$; \texttt{false} otherwise.
    \vspace{0.2em}
    \hrule
    \vspace{0.2em}
    \Procedure{InCommonComponent$_{k, \ell}$}{$u, v$}
    \State \Return $M_{u, v} = 1$
    \EndProcedure
  \end{algorithmic}
\end{algorithm}
\vspace*{-4mm}

\begin{algorithm}[H]
\caption{Update Components for $0 \leq \ell < 2k$}
\label{alg:update1}
  \begin{algorithmic}[1]
    \Input A newly formed $(k, \ell)$-component $C$.
    \Effect Updates $M$ and $\mathcal{C}$ to restore the invariants above.
    \vspace{0.2em}
    \hrule
    \vspace{0.2em}
    \Procedure{UpdateComponents$_{k, \ell}$}{$C$}
    \State $U \gets \emptyset$ \Comment{Initialize the union of the components contained in $C$}
    \State $\mathcal{C}' \gets \{C\}$ \Comment{Initialize the new set of components}
    \For{$X \in \mathcal{C}$}
      \If{$X \subseteq C$} \Comment{Component contained in $C$?}
        \label{alg:update1:if_subset}
        \For{$(u, v) \in (U \setminus X) \times (X \setminus U)$} \Comment{Mark vertex pairs between $U$ and $X$}
          \label{alg:update1:for_descartes}
          \State $M_{u, v} \gets M_{v, u} \gets 1$
        \EndFor
        \State $U \gets U \cup X$ \Comment{Merge $X$ into $U$}
        \label{alg:update1:merge}
      \Else
        \State $\mathcal{C}' \gets \mathcal{C}' \cup \{X\}$ \Comment{Append $X$ to $\mathcal{C}'$}
        \label{alg:update1:append_comp}
      \EndIf
    \EndFor
    \For{$(u, v) \in U \times (C \setminus U)$} \Comment{Mark pairs with exactly one vertex in $U$}
      \label{alg:update1:for_descartes_2}
      \State $M_{u, v} \gets M_{v, u} \gets 1$
    \EndFor
    \For{$(u, v) \in (C \setminus U) \times (C \setminus U)$}  \Comment{Mark pairs completely outside $U$}
      \label{alg:update1:for_descartes_3}
      \State $M_{u, v} \gets 1$
    \EndFor
    \State $\mathcal{C} \gets \mathcal{C}'$ \Comment{Update $\mathcal{C}$ to contain the components of the new graph}
    \EndProcedure
  \end{algorithmic}
\end{algorithm}

\paragraph{Implementation details for \textnormal{\textsc{UpdateComponents}}.}
Both $U$ and $\mathcal{C}'$ are implemented as lists.
By Corollary~\ref{corollary:comp_intersection}, the condition $X \subseteq C$ in line~\ref{alg:update1:if_subset} can be tested by verifying whether $C$ contains the first two elements of $X$ (or, if $|X| = 1$, its sole vertex).
In line~\ref{alg:update1:for_descartes}, we generate the Cartesian product $(U \setminus X)\times(X \setminus U)$ by first computing the characteristic vectors $I_U$ and $I_X$ of $U$ and $X$, respectively.
Using these, the sets $U \setminus X$ and $X \setminus U$ can be extracted in $O(n)$ time so that two nested loops can enumerate all pairs in their Cartesian product.
The same method is applied to the loops in lines~\ref{alg:update1:for_descartes_2} and~\ref{alg:update1:for_descartes_3}.
The vector $I_U$ is reused when merging sets in line~\ref{alg:update1:merge} to ensure that each vertex $v$ in $X$ is appended to $U$ only once, that is, if $I_U(v)=0$.
$\bullet$
\medskip

By Corollary~\ref{corollary:comp_tot_size}, the list $\mathcal{C}$ requires $O(n)$ space throughout the execution of the algorithm.
The bit matrix $M$ contributes $O(n^2)$ additional space.
The graph traversal, Cartesian product generation, and characteristic vectors each require $O(n)$ space.
Thus, the total space complexity of the algorithm is $O(n^2)$, excluding the space required to store the input.

\medskip
The running-time analysis is more subtle.
The following theorem establishes our main result.
\begin{theorem}
\label{theorem:alg_comp_complexity}
Algorithm~\ref{alg:comp}, combined with the subroutines above, runs in $O(n^2 + m)$ time.
\end{theorem}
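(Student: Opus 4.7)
The plan is to combine Proposition~\ref{proposition:alg_comp_complexity}, which already accounts for the main \texttt{for} loop outside of the two component-management subroutines, with a careful amortized analysis of the total work done by \textsc{InCommonComponent} and \textsc{UpdateComponents} across the entire execution. The \textsc{InCommonComponent} side is routine: each call is a single $M_{u,v}$ lookup in $O(1)$ time, and the main loop invokes it at most once per edge, contributing $O(m)$ overall. So the real question is how much \textsc{UpdateComponents} costs in total.

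By Lemma~\ref{lemma:comp_tot_num} there are only $O(n)$ calls to \textsc{UpdateComponents}, so I would split each call's cost into three budgets and amortize them separately: (i) the traversal of the current list $\mathcal{C}$ with its $O(1)$ subset checks; (ii) the per-sub-component bookkeeping in lines \ref{alg:update1:if_subset}--\ref{alg:update1:merge}, where the characteristic vector $I_X$ is built and the sets $U\setminus X$ and $X\setminus U$ are extracted in $O(n)$ time; and (iii) the pair-marking work of the three Cartesian products in lines \ref{alg:update1:for_descartes}, \ref{alg:update1:for_descartes_2}, \ref{alg:update1:for_descartes_3}. Budget (i) is immediate, since $|\mathcal{C}|=O(n)$ at all times and there are $O(n)$ calls, yielding $O(n^{2})$.

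For budget (ii), the key observation is that a component can be absorbed as a sub-component at most once during the entire algorithm: as soon as it is handled it is replaced by the larger $C$ and removed from $\mathcal{C}'$, while every other component is simply promoted unchanged. Combined with the fact that only $O(n)$ components are ever created, the total number of sub-component processings across the whole execution is $O(n)$, each costing $O(n)$ for its bit vectors and set extractions, for $O(n^{2})$ overall. For budget (iii), I plan to argue that, within a single call for the new component $C$, the three Cartesian products jointly enumerate each pair $(u,v)\in C\times C$ for which $u$ and $v$ did not already share a sub-component of $C$, and that this is exactly the moment when $M_{u,v}$ turns from $0$ to $1$. The hard part, and where I would spend the most care, is showing that no pair $(u,v)$ is ever enumerated twice across different calls: if $u,v$ already lay in a common component $W$ before, then Lemma~\ref{lemma:intersection} forces $W$ to be a sub-block of any later new component $C'$ containing both, so $W$ is processed as a sub-component of $C'$, placing $u$ and $v$ simultaneously into $U$ and preventing any of the three product patterns from selecting that pair. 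Hence the total work done inside these products is bounded by the number of distinct unordered pairs, giving $O(n^{2})$.

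Summing budgets (i), (ii), (iii), the $O(m)$ from \textsc{InCommonComponent}, and the $O(n^{2}+m)$ from Proposition~\ref{proposition:alg_comp_complexity} yields the claimed $O(n^{2}+m)$ bound. The main technical obstacle is (iii): verifying that the three Cartesian-product loops really do form, across the entire execution, a disjoint enumeration of vertex pairs and that the invariant on $M$ is maintained without any redundant re-marking.
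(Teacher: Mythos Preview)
Your budgets (i) and (ii) and the treatment of \textsc{InCommonComponent} are fine and match the paper. The gap is in budget (iii): the claim that the three Cartesian-product loops enumerate only pairs $(u,v)$ that did \emph{not} previously share a sub-component of $C$ is false, and your argument for it breaks precisely where you said you would spend the most care.

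Take four $(2,3)$-components $C_1,C_2,C_3,C_4$ all absorbed into a new component $C$, arranged so that $C_1\cap C_4=\{a\}$ and $C_3\cap C_4=\{b\}$ with $a\ne b$ and $a\notin C_3$, $b\notin C_1\cup C_2$ (the configuration in Figure~\ref{fig:counterexample_bounded} is of exactly this type). After steps $1$ and $2$ we have $a\in U_2$ via $C_1$; at step~$3$, since $b\in C_3\setminus U_2$ and $a\in U_2\setminus C_3$, the loop of line~\ref{alg:update1:for_descartes} enumerates the pair $(a,b)$. But $a,b\in C_4$, so $M_{a,b}$ was already $1$: this is a redundant write. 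Your reasoning ``$W$ is processed as a sub-component of $C'$, placing $u$ and $v$ simultaneously into $U$'' fails because $a$ and $b$ each enter $U$ through a \emph{different} sub-component ($C_1$ and $C_3$, respectively) before their common sub-component $C_4$ is ever reached; nothing in your argument rules out enumeration at an earlier step $i<p$.

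The paper confronts these redundant writes head-on rather than arguing they do not occur. It splits the matrix updates into non-redundant ($0\to1$, trivially $O(n^2)$) and redundant ($1\to1$), and bounds the latter via two claims: within one call the redundant off-diagonal writes are at most $\sum_i |U_{i-1}\cap C_i|^2$, and the sparsity inequality $\sum_i i_H(C_i)\le i_H(U_t)$ forces $\sum_i |U_{i-1}\cap C_i|<2t$, hence $\sum_i |U_{i-1}\cap C_i|^2\le 4t^2$. Summing $4t_C^2+n$ over the $O(n)$ newly formed components then gives $O(n^2)$. Your proposal has no substitute for this overlap bound; without it, budget (iii) is not established.
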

\begin{proof}
We analyze the time complexity of Algorithm~\ref{alg:comp} assuming the \textsc{FindComponent}, \textsc{InCommonComponent}, and \textsc{UpdateComponents} subroutines are implemented as described in Algorithms~\ref{alg:find_comp},~\ref{alg:in_common1},~and~\ref{alg:update1}, respectively.
By Proposition~\ref{proposition:alg_comp_complexity}, it suffices to bound the total running time of the \textsc{InCommonComponent} and \textsc{UpdateComponents} subroutines.
The \textsc{InCommonComponent} routine is called once per processed edge, runs in constant time, and hence contributes a total of $O(m)$ to the running time.
\medskip

It remains to analyze \textsc{UpdateComponents}.
Note that when the condition in line~\ref{alg:update1:if_subset} evaluates to true, the algorithm constructs characteristic vectors and extends the list $U$ in $O(n)$ time.
By Lemma~\ref{lemma:comp_tot_num}, this branch is entered only $O(n)$ times over the entire execution, contributing $O(n^2)$ time in total.
Observe that, by Corollary~\ref{corollary:comp_tot_size}, the time spent in line~\ref{alg:update1:append_comp} during a single call is $O(n)$.
Since \textsc{UpdateComponents} is called $O(n)$ times by Lemma~\ref{lemma:comp_tot_num}, these operations take $O(n^2)$ time in total.

The only remaining parts of \textsc{UpdateComponents} are the loops in lines~\ref{alg:update1:for_descartes},~\ref{alg:update1:for_descartes_2}, and~\ref{alg:update1:for_descartes_3}, which update the matrix $M$.
We must show that the total number of such updates is $O(n^2)$.
Let $\alpha$ denote the number of entries changed from $0$ to $1$, and $\beta$ the number of entries redundantly set from $1$ to $1$.
Since $M$ has $n^2$ entries, clearly $\alpha = O(n^2)$.
It remains to prove that $\beta$, the number of redundant writes, is also $O(n^2)$.
We establish this in two steps.

\begin{claim}
\label{claim:bound}
Let $C$ be the newly formed $(k, \ell)$-component passed to \textsc{UpdateComponents}, and let $C_1, \dots, C_t$ be the $(k, \ell)$-components that were fully contained in $C$ and thus deleted.
Let $U_0 = \emptyset$ and define $U_{i} = U_{i-1} \cup C_{i}$ for $i = 1, \dots, t$.
Then the number $\beta_C$ of redundant updates made during this subroutine call satisfies
\[
  \beta_C \leq n + \sum_{i=1}^t |U_{i-1} \cap C_{i}|^2.
\]
\end{claim}
\begin{proof}
Redundant updates arise only in the loops of lines~\ref{alg:update1:for_descartes} and~\ref{alg:update1:for_descartes_3}.
For the loop in line~\ref{alg:update1:for_descartes_3}, every redundant write occurs on the main diagonal of $M$, so their total number is bounded by~$n$.

Now consider the redundant writes made by the loop in line~\ref{alg:update1:for_descartes}.
Let $u,v \in C$ be distinct vertices such that the loop sets $M_{u,v}=1$ even though it was already $1$.
This situation can occur only if, before $C$ was formed, there existed a (unique) component $X$ containing both $u$ and $v$, i.e., $|X \cap C|\geq 2$.
By Lemma~\ref{lemma:intersection}, this implies that $X \cup C$ is a block.
Since $C$ is maximal, we must have $X \cup C = C$, hence $X \subseteq C$, and therefore $X$ coincides with one of the deleted components $C_p$.

Each redundant off-diagonal update contributed by the component $C_p$ corresponds to an ordered pair of distinct vertices $u, v \in C_p$ such that $M_{u,v}$ is redundantly set to $1$ during the current subroutine call.
The number of such pairs equals the $p$-th term in the first sum below, and summing over all $p$ yields
$$
  \sum_{i=1}^t 2 \cdot \binom{|U_{i-1} \cap C_{i}|}{2} \leq \sum_{i=1}^t |U_{i-1} \cap C_{i}|^2.
$$
Adding the diagonal contribution gives the stated upper bound on $\beta_C$.
\end{proof}

\begin{claim}
\label{claim:intersection}
With $C_1, \dots, C_t$ and $U_0, \dots, U_t$ defined as above,
\[
  \sum_{i=1}^t |U_{i-1} \cap C_{i}|^2 \leq 4t^2.
\]
\end{claim}
\begin{proof}
By the sparsity condition, $i_H(U_t) \leq k|U_t| - \ell$, and $i_H(C_{i}) = k|C_{i}| - \ell$ for each $i$ since each $C_{i}$ is a component.
By Corollary~\ref{corollary:comp_intersection}, each edge is induced by at most one component, so
$$
  \sum_{i=1}^t i_H(C_{i}) \leq i_H( C_1 \cup \dots \cup C_t ) = i_H(U_t).
$$
Combining these, we obtain
$$
  k \sum_{i=1}^t |C_{i}| - t \ell = \sum_{i=1}^t i_H(C_{i}) \leq i_H(U_t) \leq k|U_t| - \ell.
$$
Rearranging terms yields
$$
  \sum_{i=1}^t |U_{i-1} \cap C_{i}| = \sum_{i=1}^t \left( |C_{i}| - |C_{i} \setminus U_{i-1}| \right) = \sum_{i=1}^t |C_{i}| - |U_t| \leq \frac{\ell}{k} \cdot (t - 1) < 2(t - 1).
$$
Squaring both sides gives
$$
\sum_{i=1}^t |U_{i-1} \cap C_{i}|^2
  \leq \left( \sum_{i=1}^t |U_{i-1} \cap C_{i}| \right)^2
  \leq 4(t-1)^2 \leq 4t^2,
$$
which we had to show.
\end{proof}

Let $\mathcal{C}^*$ denote the set of $(k, \ell)$-components formed during the execution of the algorithm.
For each $C \in \mathcal{C}^*$, let $t_C$ be the number of previous $(k, \ell)$-components deleted when $C$ was formed.
By Claims~\ref{claim:bound} and~\ref{claim:intersection}, the number of redundant updates contributed by $C$ satisfies $\beta_C \leq 4t_C^2 + n$.
Lemma~\ref{lemma:comp_tot_num} implies that both $|\mathcal{C}^*|$ and the sum of all $t_C$ is $O(n)$, so by summing over all $(k, \ell)$-components, we obtain
$$
  \beta = \sum_{C \in \mathcal{C}^*} \beta_C \leq \sum_{C \in \mathcal{C}^*} (4t_C^2 + n) = \sum_{C \in \mathcal{C}^*} 4t_C^2 + n \cdot |\mathcal{C}^*| \leq 4 \left(\sum_{C \in \mathcal{C}^*} t_C\right)^2 + n \cdot |\mathcal{C}^*| = O(n^2).
$$
Thus $\beta = O(n^2)$, and together with $\alpha = O(n^2)$ we conclude that the matrix $M$ is updated $O(n^2)$ times overall.
This completes the proof of Theorem~\ref{theorem:alg_comp_complexity}, establishing that the algorithm indeed runs in $O(n^2 + m)$ time.
\end{proof}

\begin{note}
As noted earlier, the data structure originally proposed to accelerate the naive pebble game algorithm relies on a flawed running-time analysis.
In particular, it assumes that, after inserting an edge into the sought $(k, \ell)$-sparse subgraph, a generalization of the so-called \emph{bounded property} holds, that is,
\[
  |(C_1 \cup \dots \cup C_p) \cap (C_{p+1} \cup \dots \cup C_{p+q})| \leq 1,
\]
where $C_1, \dots, C_{p+q}$ denote arbitrary $(k, \ell)$-components to be deleted~\cite{pebble,pebbleDS}.
However, this assumption does not hold in general, as shown in the figure below.

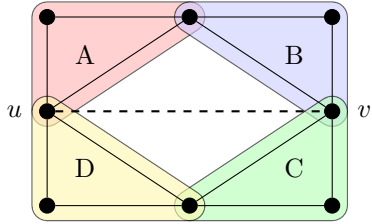
\begin{figure}[H]
  \centering
  \begin{tikzpicture}[scale=1.25]

    \node[draw, circle, fill=black, inner sep=2pt, label={[xshift=-1mm]left:$u$}] (A) at (0,1) {};
    \node[draw, circle, fill=black, inner sep=2pt] (B) at (1.5,2) {};
    \node[draw, circle, fill=black, inner sep=2pt, label={[xshift=1mm]right:$v$}] (C) at (3,1) {};
    \node[draw, circle, fill=black, inner sep=2pt] (D) at (1.5,0) {};

    \node[draw, circle, fill=black, inner sep=2pt] (E) at (0,2) {};
    \node[draw, circle, fill=black, inner sep=2pt] (F) at (3,2) {};
    \node[draw, circle, fill=black, inner sep=2pt] (G) at (3,0) {};
    \node[draw, circle, fill=black, inner sep=2pt] (H) at (0,0) {};

    \begin{pgfonlayer}{background}
    \begin{scope}[opacity=.6, transparency group]
      \path[line cap=round, line join=round, line width=0.5pt, double=red!30, double distance=3.8mm, draw=black]{(A.center) -- (B.center) -- (E.center) -- cycle};
      \highlight{3.8mm}{red!30, fill=red!30}{(A.center) to (B.center) to (E.center) -- cycle}
    \end{scope}
    \begin{scope}[opacity=.6, transparency group]
      \path[line cap=round, line join=round, line width=0.5pt, double=blue!20, double distance=3.8mm, draw=black]{(B.center) -- (C.center) -- (F.center) -- cycle};
      \highlight{3.8mm}{blue!20, fill=blue!20}{(B.center) to (C.center) to (F.center) -- cycle}
    \end{scope}
    \begin{scope}[opacity=.6, transparency group]
      \path[line cap=round, line join=round, line width=0.5pt, double=green!30, double distance=3.8mm, draw=black]{(C.center) -- (D.center) -- (G.center) -- cycle};
      \highlight{3.8mm}{green!30, fill=green!30}{(C.center) to (D.center) to (G.center) -- cycle}
    \end{scope}
    \begin{scope}[opacity=.6, transparency group]
      \path[line cap=round, line join=round, line width=0.5pt, double=yellow!40, double distance=3.8mm, draw=black]{(D.center) -- (A.center) -- (H.center) -- cycle};
      \highlight{3.8mm}{yellow!40, fill=yellow!40}{(D.center) to (A.center) to (H.center) -- cycle}
    \end{scope}
    \end{pgfonlayer}

    \draw (A) -- (B) -- (C) -- (D) -- (A);

    \draw (A) -- (E) -- (B);
    \draw (B) -- (F) -- (C);
    \draw (C) -- (G) -- (D);
    \draw (D) -- (H) -- (A);

    \draw[dashed, thick] (A) -- (C);

    \node at (0.4,1.6) {A};
    \node at (2.6,1.6) {B};
    \node at (2.6,0.4) {C};
    \node at (0.4,0.4) {D};
  \end{tikzpicture}
  \caption{Counterexample to the generalized bounded property.}\label{fig:counterexample_bounded}
\end{figure}

In Figure~\ref{fig:counterexample_bounded}, the $(2,3)$-components $A$, $B$, $C$, and $D$ are deleted after inserting edge $uv$, making the entire graph tight.
In this configuration, $|(A \cup B \cup C) \cap D| = 2 > 1$, violating the bounded property.
$\bullet$

\end{note}

\section{Optimized implementations for special cases}\label{sec:spec_cases}

For completeness, we describe optimized implementations for two important special cases of our problem.
Both approaches have been discussed in the literature~\cite{berg2003algorithms, pebble, pebbleDS}, but we present them here in a precise, self-contained form that integrates naturally into the framework of Algorithm~\ref{alg:comp}.
In each case, Algorithm~\ref{alg:comp} is extended with specialized implementations of the \textsc{InCommonComponent} and \textsc{UpdateComponents} subroutines.
These variants achieve the same $O(n^2 + m)$ running time as in the general case, but they require less space and are considerably easier to implement and analyze.

\subsection{The range $0 \leq \ell \leq k$}\label{sec:spec_cases:lower_range}

When $\ell \leq k$, the structure of $(k, \ell)$-sparse graphs simplifies significantly: their $(k, \ell)$-components cannot overlap.
The following lemma formalizes this observation and can be proved by a reasoning similar to that of Corollary~\ref{corollary:comp_intersection}.
\begin{lemma}
\label{lemma:intersection2}
If $\ell \leq k$, then the $(k, \ell)$-components in any $(k, \ell)$-sparse graph $H = (V, F)$ are pairwise disjoint.
\end{lemma}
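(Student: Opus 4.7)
By Corollary~\ref{corollary:comp_intersection}, any two distinct $(k,\ell)$-components of $H$ already share at most one vertex, so the task reduces to ruling out the case of a single shared vertex. The plan is to suppose, for contradiction, that distinct components $X$ and $Y$ satisfy $X \cap Y = \{v\}$, and then to exploit the supermodularity of $i_H$ in exactly the same style as the proof of Lemma~\ref{lemma:intersection}, now combined with the stronger hypothesis $\ell \leq k$.

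The first step is to dispose of the degenerate case in which one of $X$ or $Y$ equals the singleton $\{v\}$: such a singleton would then be strictly contained in the other component, contradicting maximality. So I may assume $|X|, |Y| \geq 2$, which in particular gives $k|X \cup Y| - \ell > 0$, allowing the $\max$ to be dropped in the sparsity bound. Next, I would chain supermodularity of $i_H$ with the sparsity bound:
\[
  i_H(X) + i_H(Y) \leq i_H(X \cap Y) + i_H(X \cup Y) \leq 0 + \bigl(k|X \cup Y| - \ell\bigr).
\]
Substituting $i_H(X) = k|X| - \ell$ and $i_H(Y) = k|Y| - \ell$ on the left, using $i_H(\{v\}) = 0$, and simplifying via $|X| + |Y| = |X \cup Y| + 1$ produces the clean inequality $k \leq \ell$.

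Combined with the hypothesis $\ell \leq k$, this forces $k = \ell$ and, more importantly, forces every inequality in the chain above to hold with equality. In particular $i_H(X \cup Y) = k|X \cup Y| - \ell$, so $X \cup Y$ is itself a block strictly containing both $X$ and $Y$, contradicting the maximality that defines a component. If $\ell < k$ strictly, the inequality $k \leq \ell$ is already a direct contradiction and no maximality step is needed. The only real obstacle, and the reason the argument is not purely numerical, is precisely this boundary case $\ell = k$, which I expect to handle by the equality-case maximality argument just sketched.
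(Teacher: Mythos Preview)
Your argument is correct and follows exactly the route the paper indicates: mirror the supermodularity computation from Lemma~\ref{lemma:intersection} (now with $|X\cap Y|=1$ and $i_H(\{v\})=0$) to force $k\leq \ell$, then use equality in the boundary case $\ell=k$ to exhibit $X\cup Y$ as a block strictly containing $X$ and $Y$, contradicting maximality. This is precisely the ``reasoning similar to that of Corollary~\ref{corollary:comp_intersection}'' the paper alludes to, so nothing is missing.
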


By this lemma, each vertex is contained in at most one $(k, \ell)$-component.
Therefore, it suffices to store an array $R \in (V \cup \{\varnothing\})^V$, where $R(v)$ is a representative of the $(k, \ell)$-component containing $v$, or $\varnothing$ if $v$ does not belong to any $(k, \ell)$-component.
We initialize $R(v) = \varnothing$ for each $v \in V$ when $\ell < k$, and $R(v) = v$ when $\ell \geq k$.
Throughout Algorithm~\ref{alg:comp}, we maintain $R$ so that the following invariants hold for the current graph $H$ at each iteration:
\begin{itemize}
  \item For each $v \in V$, we have $R(v) \neq \varnothing$ if and only if there exists a $(k, \ell)$-component in $H$ containing $v$.
  \item For each $u, v \in V$ with $R(u) \neq \varnothing$ and $R(v) \neq \varnothing$, we have $R(u) = R(v)$ if and only if there exists a $(k, \ell)$-component in $H$ containing both $u$ and $v$.
\end{itemize}
Under these invariants, the implementation of the \textsc{InCommonComponent} and \textsc{UpdateComponents} subroutines simplify as follows.

\begin{algorithm}[H]
\caption{In Common Component for $0 \leq \ell \leq k$}\label{alg:in_common2}
  \begin{algorithmic}[1]
    \Input Two vertices $u, v \in V$.
    \Output \texttt{true} if there exists a $(k, \ell)$-component containing both $u$ and $v$; \texttt{false} otherwise.
    \vspace{0.2em}
    \hrule
    \vspace{0.2em}
    \Procedure{InCommonComponent$_{k, \ell}$}{$u, v$}
    \State \Return $R(u) \neq \varnothing$ \textbf{and} $R(v) \neq \varnothing$ \textbf{and} $R(u) = R(v)$
    \EndProcedure
  \end{algorithmic}
\end{algorithm}
\vspace*{-4mm}

\begin{algorithm}[H]
\caption{Update Components for $0 \leq \ell \leq k$}
\label{alg:update2}
  \begin{algorithmic}[1]
    \Input A newly formed $(k, \ell)$-component $C$.
    \Effect Updates $R$ so that all vertices of $C$ share the same representative.
    \vspace{0.2em}
    \hrule
    \vspace{0.2em}
    \Procedure{UpdateComponents$_{k, \ell}$}{$C$}
    \State $u \gets \operatorname{head}(C)$ \Comment{Choose the first vertex of $C$ as representative}
    \For{$v \in C$} \Comment{Set $u$ as the representative of each vertex in $C$}
    	\State $R(v) \gets u$
    \EndFor
    \EndProcedure
  \end{algorithmic}
\end{algorithm}

\begin{proposition}
Algorithm~\ref{alg:comp}, extended with the subroutines above, runs in $O(n^2 + m)$ time.
\end{proposition}
\begin{proof}
The \textsc{InCommonComponent} subroutine takes constant time, while \textsc{UpdateComponents} requires $O(n)$ time per call.
By Lemma~\ref{lemma:comp_tot_num}, the number of $(k, \ell)$-components formed during the execution of the algorithm is $O(n)$, so the total time spent in \textsc{UpdateComponents} is $O(n^2)$.
Combining this with Proposition~\ref{proposition:alg_comp_complexity}, we get an $O(n^2+m)$ running-time bound for the entire algorithm.
\end{proof}
Hence, this specialized algorithm preserves the $O(n^2 + m)$ running time of the general approach while reducing its space usage from $O(n^2)$ to $O(n)$, as the $n \times n$ matrix $M$ is no longer needed.
Note that this space bound excludes the space required to store the input graph.

\subsection{The unweighted case}\label{sec:spec_cases:unweighted}

For the weighted problem, Algorithm~\ref{alg:comp} must process the edges of the input graph in decreasing order of weight.
In the unweighted case, however, the edges can be processed in any order.
We exploit this flexibility by choosing, for each edge, an arbitrary endpoint, grouping the edges by this chosen endpoint, and then processing them vertex by vertex.
This enables a simpler and more space-efficient implementation.

Since we process the edges vertex by vertex, there is no need to maintain the full $n \times n$ bit matrix to determine whether an edge $uv$ can be inserted.
Instead, we maintain an array $I \in \{0, 1\}^V$ and the list $\mathcal{C}$ of the $(k, \ell)$-components, as in the general algorithm.
While processing the edges incident to a given vertex $u$, the value of $I(v)$ indicates whether there exists a $(k, \ell)$-component containing both $u$ and $v$.
If $\ell < k$, then we initialize $\mathcal{C}$ as an empty list; otherwise, we populate $\mathcal{C}$ with the singleton $(k, \ell)$-components.
When processing the edges incident to a vertex $u$, we maintain $I$ and $\mathcal{C}$ so that the following invariants hold for the current subgraph $H$ at each iteration:
\begin{itemize}
\item For each $v \in V$, we have $I(v) = 1$ if and only if there is a $(k, \ell)$-component in $H$ containing both $u$ and $v$.
\item The list $\mathcal{C}$ contains exactly the $(k, \ell)$-components in $H$.
\end{itemize}
The array $I$ is recomputed using a \textsc{Recalculate} subroutine whenever we move to a new vertex $u$.
A global variable $u_0 \in V\cup\{\varnothing\}$, initialized to $\varnothing$, records the first endpoint of the last processed edge to detect when this recomputation is needed.
With this setup in place, the implementations of the \textsc{Recalculate}, \textsc{InCommonComponent}, and \textsc{UpdateComponents} procedures are given below.

\begin{algorithm}[H]
\caption{Recalculate for the Unweighted Case}
\label{alg:recalculate}
  \begin{algorithmic}[1]
    \Input A vertex $u \in V$.
    \Effect Recomputes $I$ so that $I(v)=1$ if and only if there is a $(k, \ell)$-component containing both $u$ and $v$.
    \vspace{0.2em}
    \hrule
    \vspace{0.2em}
    \Procedure{Recalculate$_{k, \ell}$}{$u$}
    \State $I \gets 0$
    	\For{$X \in \mathcal{C}$}
    		\If{$u \in X$} \Comment{Component contains $u$?}
    			\For{$v \in X$} \Comment{Mark all vertices of $X$}
    				\State $I(v) \gets 1$
    			\EndFor
    		\EndIf
    	\EndFor
    \EndProcedure
  \end{algorithmic}
\end{algorithm}
\vspace{-3.5mm}

\begin{algorithm}[H]
\caption{In Common Component for the Unweighted Case}
\label{alg:in_common3}
  \begin{algorithmic}[1]
    \Input Two vertices $u, v \in V$.
    \Effect Update $u_0$ and recompute $I$ if $u$ changes.
    \Output \texttt{true} if there exists a $(k, \ell)$-component containing both $u$ and $v$; \texttt{false} otherwise.
    \vspace{0.2em}
    \hrule
    \vspace{0.2em}
    \Procedure{InCommonComponent$_{k, \ell}$}{$u, v$}
    \If{$u_0 \neq u$} \Comment{New first endpoint?}
    	\label{alg:in_common3:if_new_vertex}
    	\State $u_0 \gets u$
    	\State \Call{Recalculate$_{k, \ell}$}{$u$} \Comment{Recalculate $I$ with the new endpoint $u$}
    \EndIf
    \State \Return $I(v) = 1$
    \EndProcedure
  \end{algorithmic}
\end{algorithm}
\vspace{-3.5mm}

\begin{algorithm}[H]
\caption{Update Components for the Unweighted Case}
\label{alg:update3}
  \begin{algorithmic}[1]
    \Input A newly formed $(k, \ell)$-component $C$.
    \Effect Updates $I$ and $\mathcal{C}$ to restore the invariants above.
    \vspace{0.2em}
    \hrule
    \vspace{0.2em}
    \Procedure{UpdateComponents$_{k, \ell}$}{$C$}
    \State $\mathcal{C}' \gets \{C\}$ \Comment{Initialize the new set of components}
    \For{$X \in \mathcal{C}$}
    	\If{$X \not \subseteq C$} \Comment{Component not contained in $C$?}
    		\label{alg:update3:if_subset3}
        	\State $\mathcal{C}' \gets \mathcal{C}' \cup \{X\}$ \Comment{Append $X$ to $\mathcal{C}'$}
        	\label{alg:update3:append_comp3}
        \EndIf
    \EndFor
    \State $\mathcal{C} \gets \mathcal{C}'$ \Comment{Update $\mathcal{C}$ to contain the components of the new graph}
    \For{$v \in C$} \Comment{Mark vertices of the new component}
    	\State $I(v) \gets 1$
	\EndFor
    \EndProcedure
  \end{algorithmic}
\end{algorithm}

\paragraph{Implementation details for \textnormal{\textsc{UpdateComponents}}.}
The set $\mathcal{C}'$ is implemented as a list.
To test whether $C$ contains $X$ in line~\ref{alg:update3:if_subset3}, by Corollary~\ref{corollary:comp_intersection}, it suffices to check whether $C$ contains the first two vertices of $X$ (or, if $|X| = 1$, its sole vertex).
$\bullet$

\begin{proposition}
Algorithm~\ref{alg:comp}, extended with the subroutines above, runs in $O(n^2 + m)$ time.
\end{proposition}
\begin{proof}
Each call to \textsc{UpdateComponents} takes $O(n)$ time, and since the number of $(k, \ell)$-components is $O(n)$ in total, the combined time of all calls is $O(n^2)$.
By Corollary~\ref{corollary:comp_tot_size}, the running time of \textsc{Recalculate} is $O(n)$.
The branch in line~\ref{alg:in_common3:if_new_vertex} of \textsc{InCommonComponent} is entered at most $n$ times throughout the algorithm, resulting in an $O(n^2 + m)$ running time over all calls to the subroutine.
Combining this with Proposition~\ref{proposition:alg_comp_complexity}, we get an $O(n^2+m)$ running-time bound for the entire algorithm.
\end{proof}

As in the general case, the list $\mathcal{C}$ requires $O(n)$ space at any point during the algorithm.
The array~$I$ also takes linear space.
Thus, the space complexity of the algorithm --- excluding the space required to store the input --- is $O(n)$, thereby improving over the matrix-based method.
\vspace{-1mm}

\section*{Acknowledgment}
The authors are grateful to Tibor Jord\'an for his comments on a previous version of the manuscript.
This research has been implemented with the support provided by the Ministry of Innovation and Technology of Hungary from the National Research, Development and Innovation Fund, financed under the ELTE TKP 2021-NKTA-62 funding scheme, by the Ministry of Innovation and Technology NRDI Office within the framework of the Artificial Intelligence National Laboratory Program, by the Lend\"ulet Programme of the Hungarian Academy of Sciences --- grant number LP2021-1/2021, and by the Ministry of Innovation and Technology of Hungary from the National Research, Development and Innovation Fund --- grant number ADVANCED~150556.

\bibliographystyle{unsrt}
\bibliography{bibliography}

\end{document}